\documentclass[11pt]{article}

\pdfoutput=1

\usepackage{style}

\title{Degree Balance as a Fine-Grained Complexity Boundary for Quantum SAT}

\author[1]{Atsuya Hasegawa\thanks{atsuya.hasegawa@math.nagoya-u.ac.jp}}
\author[2]{Jonas Kamminga\thanks{jonas.kamminga@upb.de}}
\author[1]{Fran{\c{c}}ois Le Gall\thanks{legall@math.nagoya-u.ac.jp}}
\author[3]{Suguru Tamaki\thanks{tamak@sis.u-hyogo.ac.jp}}

\affil[1]{\textit{Graduate School of Mathematics, Nagoya University, Japan}}
\affil[2]{\textit{Department of Computer Science and Institute for Photonic Quantum Systems (PhoQS), Paderborn University, Germany}}
\affil[3]{\textit{{Graduate School of Information Science, University of Hyogo, Japan}}}

\date{}

\begin{document}

\maketitle

\begin{abstract}
The local Hamiltonian problem is the canonical $\mathsf{QMA}$-complete problem, and $O(2^n)$ time classical algorithms and $O(2^{n/2})$ time quantum algorithms are known to solve the problem in the worst case. It is not clear how to improve these brute force strategies for a broad class of the problem because ground states are highly entangled in general, and we cannot directly apply known strategies for classical CSPs.

In this work, we present exponentially faster classical and quantum algorithms under two mild assumptions: (1) the Hamiltonian is frustration-free on YES instances, and (2) it is approximately regular, meaning that every qubit is acted upon by approximately the same number of constraints.

We complement these upper bounds by showing that, assuming (Q)SETH, quantum 5-SAT admits no non-trivial worst-case speedup. Our lower bound further demonstrates that the dependence of our algorithms on regularity is in some sense nearly optimal. 
Specifically, quantum 5-SAT remains (Q)SETH-hard even for Hamiltonians in which all but $O(\sqrt{n})$ qubits participate in only constantly many constraints, while the remaining $O(\sqrt{n})$ qubits each participate in $O(\sqrt{n})$ constraints. 
By contrast, if either the size of this high-degree subset or the degrees of its qubits is reduced by a factor of $n^\delta$, for any $\delta>0$, our algorithm solves the problem in time $O(2^{(1- \varepsilon)n})$ for some $\varepsilon>0$.

Together, our upper and lower bounds establish a fine-grained complexity dichotomy for quantum satisfiability.
\end{abstract}

\clearpage

\tableofcontents

\clearpage

\section{Introduction}
\subsection{Background}
\paragraph{\texorpdfstring{$k$-SAT versus MAX-$k$-SAT}{k-SAT versus MAX-k-SAT} in fine-grained complexity.}
A central question in fine-grained complexity is whether exhaustive search over $2^n$ assignments can be improved by a constant in the exponent. Classical constraint satisfaction exhibits a striking distinction between \emph{feasibility} and \emph{exact optimization} in this respect.

For every fixed $k$, $k$-SAT admits an algorithm running in
\[
    O^*(2^{(1-\varepsilon_k)n})
\]
for some constant $\varepsilon_k>0$ depending only on $k$; this follows, for example, from Sch\"oning's algorithm and the PPSZ line of work \cite{SchoningFOCS99,Paturi+JACM05}. Thus NP-completeness does not prevent a constant improvement over exhaustive search for fixed-width SAT.

The situation is much less understood for exact optimization.  MAX-$2$-SAT admits exponential-time algorithms beating $2^n$ \cite{WilliamsTCS05}, but already for exact MAX-E$3$-SAT, where every clause contains exactly 3 variables, it is a longstanding open problem whether there is any constant $\varepsilon>0$ for which the problem can be solved in $O^*(2^{(1-\varepsilon)n})$ time (see, e.g., \cite{gall2026dequantizing} for a recent discussion). At the exponential-time scale, satisfiability therefore appears capable of supporting pruning and structural reductions that exact optimization may not.

\paragraph{Quantum $k$-SAT and quantum MAX-$k$-SAT}
Kitaev \cite{Kitaev+02} introduced the $k$-local Hamiltonian problem ($k$-LH), which asks to estimate the ground-state energy of a $k$-local Hamiltonian 
\[
	H=\sum_{i=1}^m h_i\,,
\]
acting on $n$ qubits, where $m=\poly(n)$ and $||H||\le \poly(n)$, with $1/\poly(n)$ additive error.
Kitaev also showed that the problem is $\QMA$-complete, i.e., any efficient quantum verification procedure can be embedded into the local Hamiltonian problem. 
The local Hamiltonian problem lies at the interface between physics and computer science, and has been the central problem in quantum complexity theory.  

The quantum satisfiability problem (QSAT), introduced by Bravyi \cite{bravyi06}, is similar to the local Hamiltonian problem, except it asks whether the ground state energy is zero. That is, it asks whether all constraints $h_i$ can be simultaneously satisfied and is thus a feasibility problem as opposed to the local Hamiltonian problem which is an optimization problem.

The $k$-local Hamiltonian problem (and thus QSAT) can be solved classically in $\myO{2^n}$\footnote{In this paper the notation $\myO{\cdot}$ suppresses $\poly(n)$ factors.} time using the power method or its variant the Lanczos method \cite{KW92,Lanczos50}, for $k=O(\log n)$.
On a quantum computer, we can use Grover search \cite{GroverSTOC96} combined with other techniques to achieve running time $\myO{2^{n/2}}$ \cite{Apeldoorn+20,Ge+19,Gilyen+STOC19,Kerzner+24,Lin+20,Martyn+21,Poulin+09}.

The SAT/MAX-SAT gap suggests a natural analogous question for quantum constraint problems.

\paragraph{Starting point: $k$-LH is SETH hard.}
The starting point of the work is a recent result by Chia, Hasegawa, Le Gall, and Shen \cite{ChiaHLGY26} showing that the running time of the known exponential-time algorithms is essentially optimal already for $3$-local Hamiltonian. Assuming SETH, there is no classical $O(2^{(1-\varepsilon)n})$-time algorithm for any constant $\varepsilon>0$; assuming QSETH, there is no quantum $O(2^{(1-\varepsilon)n/2})$-time algorithm.
Their key technical ingredient is a size-preserving circuit-to-Hamiltonian construction that encodes a $T$-step computation using only sublinear ($o(T)$) clock (or size) overhead.

Thus, the quantum analogue of the apparently hard MAX-SAT side is now understood in a strong fine-grained sense: even locality $3$ needs a $O(2^n)$ runtime (assuming SETH).  This makes the feasibility side the next natural question.

\paragraph{Main question: what about QSAT?}

For a projector Hamiltonian
\[
    H=\sum_{i=1}^m \Pi_i,
\]
Quantum $k$-SAT asks whether $\lambda_{\min}(H)=0$, equivalently whether there is a state lying in the kernel of every local projector. YES instances are therefore frustration-free: the global optimum is known in advance, and the problem is only to decide whether all local constraints can be satisfied simultaneously. 

Does this frustration-freeness help? That is, can quantum $k$-SAT be solved in time $\myO{2^{(1 - \varepsilon)n}}$ for some $\varepsilon > 0$? This is the main question behind the current work.

The classical analogy gives a concrete reason for optimism. For fixed $k$, there are always algorithms for $k$-SAT beating exhaustive search, even though exact MAX-$3$-SAT is not known to do so.  Moreover, Quantum $2$-SAT is polynomial-time solvable \cite{bravyi06}, whereas Quantum $3$-SAT is already $\QMA_1$-complete \cite{GossetN16}\footnote{Recent work \cite{grewal2026qma,liu2026achieving} showed it is in fact $\QMA$-complete!}. Ordinary complexity therefore does not settle the fine-grained question, just as NP-completeness does not prevent faster exponential-time algorithms for classical $k$-SAT.

Our answer is neither a uniform yes nor a uniform no.  Frustration-freeness creates a genuine opportunity to compress the search space, but whether that opportunity can be exploited depends on how the interaction degrees are distributed.

\subsection{Our results}
\paragraph{Frustration-freeness helps, but only when it can be exploited}
Our first result is a positive one that shows that for a broad class of frustration-free instances, brute force can indeed be beaten. Of crucial importance turns out to be $d_q$, the number of local terms acting non-trivially on qubit $q$, and the parameters
\[
    \mu=\frac{km}{n},\qquad
    D_{\max}=\max_q d_q,\qquad
    D_p=\left(\frac1n\sum_{q=1}^n d_q^p\right)^{1/p}.
\]
Here $\mu$ upper bounds the average degree.

\begin{theorem}[Informal version of \cref{theorem:main result for upper bound}]
For every fixed locality $k$, if a frustration-free $k$-local Hamiltonian satisfies $D_{\max}\le C\mu$, or $D_p\le C\mu$ for some fixed $p>1$ and constant $C$, then there is a constant $\varepsilon>0$ such that the problem can be solved in randomized classical time
\[
    O^*(2^{(1-\varepsilon)n})
\]
and quantum time
\[
    O^*(2^{(1-\varepsilon)n/2}).
\]
The constant $\varepsilon$ depends only on $k$ and the degree-balance parameters.
\end{theorem}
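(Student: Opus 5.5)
We will exploit frustration-freeness through a \emph{subspace-restriction} argument. On a YES instance there is a state $|\psi\rangle$ annihilated by every $\Pi_i$. The goal is to find an explicit subspace $V$ of dimension at most $2^{(1-\varepsilon)n}$ such that, whenever the kernel $\ker H$ is nonzero, it meets $V$ nontrivially. We can then run the power/Lanczos method on the compression $P_V H P_V$ classically, or Grover-type amplitude amplification or phase estimation on $V$ quantumly. This gives the claimed $O^*(2^{(1-\varepsilon)n})$ and $O^*(2^{(1-\varepsilon)n/2})$ bounds. Randomness is used to choose $V$.

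\textbf{Construction of $V$.} First pick a set $S$ of qubits that are pairwise far apart in the interaction hypergraph, so that no two of them share a constraint. Then, for each $q\in S$, gather the constraints touching $q$; there are $d_q$ of them. Each such $\Pi_i$ acts on $q$ together with at most $k-1$ other qubits. We would like to ``eliminate'' $q$. The natural tool is a Schmidt-type argument: decompose $|\psi\rangle=|0\rangle_q|\alpha\rangle+|1\rangle_q|\beta\rangle$. The constraints on $q$ become linear relations between $\alpha$ and $\beta$ through local operators on the neighbourhood $N(q)$, which has size at most $(k-1)d_q$. In the favourable case some constraint containing $q$ is not of product form on $q$ (its restriction to $q$ is not trivial). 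Then the frustration-free condition forces $\beta$ to be determined by $\alpha$ via a local map, or forces a local linear constraint that cuts the dimension by a constant factor. The alternative case, where $q$ is unconstrained or product-constrained, is easy: either $q$ can be set freely, or $q$ can be fixed to a product state.

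Either way, each $q\in S$ saves a constant fraction of a qubit of dimension. A greedy argument shows that $|S|\ge n/\bigl(1+k\,D\bigr)$, where $D$ is the relevant degree bound. This is the step where degree balance enters. If $D_{\max}\le C\mu$ and $\mu$ is constant, then $|S|=\Omega(n)$ and we get $\varepsilon>0$.

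\textbf{Large $\mu$ and the $D_p$ case.} When $\mu$ is large we need a different gain. With $m=\mu n/k$ constraints, a random subset of qubits typically sees many fully-supported constraints. Each rank-$\ge1$ projector supported inside the chosen set cuts the available dimension, and frustration-freeness means these cuts compose. Concretely, I would choose a random set $T$ of $\alpha n$ qubits. Each constraint lies inside $T$ with probability about $\alpha^k$, so $T$ carries about $\alpha^k\mu n/k$ constraints, each imposing a constraint on the reduced state on $T$. Regularity ensures these constraints spread out rather than concentrating on a few qubits. We then need a lower bound showing that the joint kernel on $T$ has dimension at most $2^{|T|(1-c)}$, for example via a Shearer/entropy-type counting bound that uses the degree distribution. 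For the $D_p$ variant, Markov/H\"older arguments discard the $o(n)$ high-degree qubits, and we treat them by brute force at cost $2^{o(n)}$ times a constant factor.

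\textbf{Main obstacle.} The hardest step is the dimension-reduction lemma: proving that the local constraints on a well-spread set genuinely multiply down the dimension of the global kernel, i.e.\ $\dim(\ker H|_V)$ and the size of the search space shrink exponentially in $|S|$. Entanglement makes the classical ``fix a variable'' reasoning fail, because projectors on overlapping neighbourhoods need not commute. I would first try a product construction. The idea is to choose $S$ with disjoint closed neighbourhoods so that the local reductions act on disjoint registers. The global restricted subspace is then a tensor product of locally reduced subspaces with the rest, and its dimension is at most $2^{n}\prod_{q\in S}(1-c_k)$. This is exactly why the bound depends on $|S|\gtrsim n/D$, and hence on degree balance.
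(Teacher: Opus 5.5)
\textbf{There is a genuine gap, and it lies exactly in the regime the statement targets.}

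Your construction selects \emph{qubits} that pairwise share no constraint (or have disjoint closed neighbourhoods), and it certifies only a constant saving per selected qubit. The greedy guarantee $|S|\ge n/(1+kD)$ is linear only when degrees are bounded, and in general it cannot be improved. The hypothesis $D_{\max}\le C\mu$ (or $D_p\le C\mu$), however, constrains only the \emph{ratio} to $\mu$, and $\mu$ may grow polynomially in $n$.

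For example, take disjoint blocks of $D+1$ qubits with a $2$-local projector on every pair inside each block. Then $\mu=D_{\max}=D$, so the ratio $\mu/D_{\max}$ equals $1$. Yet any qubit family of your type contains at most one qubit per block, so the total saving is $O(n/D)=o(n)$ as $D\to\infty$.

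Your fallback for large $\mu$ is a random set $T$ plus a Shearer/entropy bound on the joint kernel of overlapping, non-commuting projectors. That bound is not proved, and it is precisely the obstacle you yourself flag, so the theorem is not established there. The $D_p$ sketch inherits the same problem. With a cut-off $\Theta$ that is a constant multiple of $\mu$, the high-degree qubits can form a constant fraction of $[n]$ rather than $o(n)$. Raising $\Theta$ to make them $o(n)$ leaves a residual maximum degree much larger than $\mu$, and your qubit-based greedy bound is again sublinear.

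\textbf{The missing idea is to select pairwise disjoint constraints} rather than qubits, i.e.\ an independent set $\mathcal{S}$ in the term-intersection graph.
\begin{itemize}
\item \emph{Size of $\mathcal{S}$.} Each term meets at most $k(D_{\max}-1)$ others, so greedy coloring with at most $kD_{\max}$ colors gives $|\mathcal{S}|\ge m/(kD_{\max})=\mu n/(k^2D_{\max})$. The factor $\mu$ in $m=\mu n/k$ cancels the $1/D_{\max}$ loss, so this is linear in $n$ however large $\mu$ is. In the example above, a matching inside each block gives about $n/2$ disjoint terms.
\item \emph{Dimension reduction.} With disjoint supports, your ``dimension-reduction lemma'' becomes immediate. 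Every frustration-free state lies in $\bigl(\bigotimes_{s\in\mathcal{S}}\ker h_s\bigr)\otimes(\mathbb{C}^2)^{\otimes(n-\sum_s k_s)}$, whose dimension is at most $2^n(1-2^{-k})^{|\mathcal{S}|}$.
\item \emph{Correctness of the compression.} Let $W$ be an isometry onto this subspace. Then $W^\dagger HW$ has ground energy $0$ on YES instances, and $\lambda(W^\dagger HW)\ge\lambda(H)$ on NO instances. Run the estimation on $W^\dagger HW$, not on $P_VHP_V$ over the full space, since the kernel of the latter always contains $V^\perp$.
\item \emph{The $D_p$ case.} Discard the \emph{constraints} touching qubits with $d_q\ge\Theta$. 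There are at most $\sum_{q:\,d_q\ge\Theta}d_q\le nD_p^p/\Theta^{p-1}$ of them. A suitable $\Theta$ of order $\mu$ (the paper takes $\Theta=(pkD_p^p/\mu)^{1/(p-1)}$) keeps a constant fraction of the $m$ terms in an instance of maximum degree below $\Theta$. The same greedy step then applies. The high-degree qubits are simply left uncompressed; no brute force over them is needed.
\end{itemize}

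Your closing ``disjoint closed neighbourhoods'' construction is the special case of choosing one constraint per selected qubit. It is therefore sound for bounded degree, but it cannot reach the growing-degree regime.
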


The condition $D_{\max}=O(\mu)$ includes regular instances, and geometrically local Hamiltonians. 
The $D_p$-based condition allows greater degree imbalance.
A particularly transparent consequence concerns a small exceptional set of high-degree qubits.

\begin{corollary}[Informal version of \cref{cor:smallnumberhighdegree}]
\label{cor:informalsmallhigh}
Consider a QSAT Hamiltonian where $O(n^a)$ qubits have degree $O(n^b)$ and all other qubits have constant degree.  If
\[
    a+b<1,
\]
then Quantum $k$-SAT admits a constant-exponent improvement over brute force, classically and quantumly.
\end{corollary}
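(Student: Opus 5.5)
The plan is to derive the corollary directly from the $D_p$ branch of the preceding theorem (the informal version of \cref{theorem:main result for upper bound}). That branch says that if $D_p \le C\mu$ for some fixed $p>1$ and constant $C$, we get the speedup. So it suffices to exhibit some fixed $p>1$ with $D_p = O(\mu)$ for the Hamiltonians in question.

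First, I will lower bound $\mu$. Every qubit on which the Hamiltonian acts trivially can be discarded, since it changes neither the kernel dimension's nonemptiness nor the problem. So I may assume every qubit has degree $d_q\ge 1$. Then
\[
km \ge \sum_q d_q \ge n,
\]
so $\mu=km/n\ge 1$. It therefore suffices to show $D_p=O(1)$.

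Next, I compute $D_p$ by splitting the qubits into the $O(n^a)$ high-degree qubits and the rest:
\[
D_p^p = \frac1n\sum_q d_q^p \le \frac1n\Bigl(O(n^a)\cdot O(n^{bp}) + n\cdot O(1)\Bigr) = O\bigl(n^{a+bp-1}\bigr)+O(1).
\]
Since $a+b<1$, we can choose a fixed $p>1$ close enough to $1$ that $a+bp\le 1$; for instance $p=1+(1-a-b)/(2b)$ when $b>0$, and any $p$ when $b=0$. With this choice $D_p^p=O(1)$, so $D_p=O(1)\le C\mu$ for a suitable constant $C$. The theorem then yields $\varepsilon>0$, depending only on $k$, $p$ and $C$, with classical time $O^*(2^{(1-\varepsilon)n})$ and quantum time $O^*(2^{(1-\varepsilon)n/2})$.

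The only delicate point is ensuring that the constants $p$ and $C$ depend only on $a$, $b$, $k$ and the hidden constants in the $O(\cdot)$'s, and not on $n$. This holds because $p$ is fixed by $a$ and $b$ alone, so $\varepsilon$ is uniform. I would also check that removing degree-zero qubits does not break the YES/NO promise; it does not, because tensoring with a free qubit preserves frustration-freeness. If that removal were awkward to justify, an alternative is to note that $\mu\ge k m/n$ and $m\ge$ (number of high-degree terms). In any case, the reduction to the theorem is the whole argument, and I expect no real obstacle beyond this bookkeeping.
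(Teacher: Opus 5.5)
Your proposal is correct and follows the paper's route. You bound $D_p$ by splitting off the $O(n^a)$ high-degree qubits and choosing $p>1$ close enough to $1$ that their contribution $O(n^{a+bp-1})$ stays bounded (the paper takes $p=1+x$ with $x\le 1-a-b$, you take $p=1+(1-a-b)/(2b)$), and then invoke the $D_p$ branch of \cref{theorem:main result for upper bound}. Your extra step is $\mu\ge 1$ after discarding idle qubits, which the paper also assumes without loss of generality; this is exactly what turns the absolute degree bounds of the informal statement into the $\mu$-relative bounds of \cref{cor:smallnumberhighdegree}. One small bookkeeping point: if discarding idle qubits leaves fewer than $n/2$ qubits, just brute-force the remainder rather than dividing the original $O(n^a)$ count by the smaller qubit number.
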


It is tempting to summarize this as ``regular instances are easier.'' There is some truth to this intuition, but it is too coarse to explain the boundary we find.

Indeed, bounded occurrence already helps for classical SAT.  If every variable occurs in at most a fixed constant $A$ clauses, then the formula has at most $An$ nonempty clauses. The linear-size CNF/circuit-SAT algorithms of Calabro, Impagliazzo, and Paturi~\cite{calabro2009complexity} therefore give, for every fixed \(A\), a constant \(\varepsilon(A)>0\), independent of the clause width, such that the formula can be solved in
\[
    O^*(2^{(1-\varepsilon(A))n})
\]
time.
Thus bounded occurrence is itself an algorithmically favorable regime; it should not be described as retaining the full SETH difficulty of unrestricted SAT.

Our result, however, goes beyond this ``small degree is easier'' phenomenon. The algorithm does not require a fixed degree bound: the average and individual degrees may grow with $n$, even polynomially, as long as the degrees are sufficiently balanced so that a linear family of disjoint constraints survives.  

\paragraph{General QSAT is (Q)SETH hard.}
We contrast our algorithmic result by showing that in general, quantum $k$-SAT is (Q)SETH hard already for $k \ge 5$. It thus behaves more like $k$-LH, which is (Q)SETH hard for $k \ge 3$ (\cite{ChiaHLGY26}) than classical SAT, which allows $\myO{2^{(1-\varepsilon)n}}$ algorithms for all constant $k$.

\begin{theorem}[Informal version of \cref{thm:sethhard}]
Assuming SETH, Quantum $5$-SAT cannot be solved in $O^*(2^{(1-\varepsilon)n})$ time for any constant $\varepsilon>0$.
Assuming QSETH, it cannot be solved in $O^*(2^{(1-\varepsilon)n/2})$ time.
\end{theorem}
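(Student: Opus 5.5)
We reduce CNF-SAT on $N$ variables (with $M=\poly(N)$ clauses) to Quantum $5$-SAT on $n=N+o(N)$ qubits. The reduction follows the size-preserving circuit-to-Hamiltonian construction of \cite{ChiaHLGY26}, but we replace Kitaev's frustrated history-state construction with a frustration-free one in the style of Bravyi and Gosset--Nagaj. Given a formula $\phi$, we first build a reversible classical verification circuit $V_\phi$. It acts on the $N$ input bits plus a small workspace, evaluates all $M$ clauses, and accumulates their conjunction into a single flag bit. The design goal is that $V_\phi$ uses only $o(N)$ ancilla and clock qubits, even though it has $T=\poly(N)$ gates. We get this by streaming through the clauses one at a time: a constant-size register computes the current clause, and an $O(\log M)$-bit counter, or a one-bit AND flag, accumulates the results. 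Each clause is then uncomputed before we move to the next one.

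We then encode the computation history into a projector Hamiltonian $H=\sum_i\Pi_i$. The input qubits are left unconstrained and carry the candidate assignment $x$. The terms are the following.
\begin{enumerate}
\item Propagation projectors enforce $|\psi_t\rangle\mapsto U_t|\psi_t\rangle$ between consecutive clock values.
\item Initialization projectors force the workspace to $|0\rangle$ at $t=0$.
\item An output projector penalizes flag $=0$ at $t=T$.
\end{enumerate}
Because $V_\phi$ is classical and reversible, the key claim is the following. If $\phi$ is satisfiable, the history state $\sum_t|t\rangle|\psi_t(x)\rangle$ for a satisfying $x$ lies in the common kernel, so $H$ is frustration-free. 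If $\phi$ is unsatisfiable, every kernel vector of the propagation and initialization terms is a superposition of histories of assignments. Each of these fails the output check, so $\lambda_{\min}(H)\ge 1/\poly$. We prove this via the standard projection (Kitaev) lemma, or the Gosset--Nagaj-style argument used for $\QMA_1$-hardness of Quantum $3$-SAT, with the gap polynomial in $T$.

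The main obstacle is the clock. A unary clock of length $T$ costs $T\gg N$ qubits and would destroy the fine-grained reduction. We must therefore reuse the sublinear-overhead clock of \cite{ChiaHLGY26} and make it compatible with frustration-freeness and with locality $5$. I would try a hierarchical or counter-based clock: $O(\sqrt{N})$ clock or control qubits, each touching $O(\sqrt{N})$ gates, drive a periodic schedule over the input qubits. The input qubits are each acted upon by only $O(1)$ terms per pass, consistent with the degree profile stated in the abstract. Each propagation term must then couple only a constant number of clock qubits, about two or three, to the at most two or three data qubits of a gate. This gives $k=5$. We must also check that the clock's legal-state subspace is itself the kernel of local projectors, which are illegal-clock-pattern penalties. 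These are satisfiable, so they preserve frustration-freeness.

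Finally, we count qubits. We have $n=N+O(\sqrt{N}\,\polylog N)$, so an $O^*(2^{(1-\varepsilon)n})$ classical algorithm, respectively an $O^*(2^{(1-\varepsilon)n/2})$ quantum algorithm, for Quantum $5$-SAT with $1/\poly$ promise gap would solve CNF-SAT in time $O^*(2^{(1-\varepsilon')N})$, respectively $O^*(2^{(1-\varepsilon')N/2})$. This contradicts SETH, respectively QSETH. The quantum case also requires that the reduction is classical and polynomial-time, so the QSETH version of \cite{ChiaHLGY26} applies verbatim.
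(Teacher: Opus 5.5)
Your reduction template is the paper's: a streaming SAT verifier with $O(\log N)$ ancillas, a history state on a two-digit compressed clock, input and output penalty projectors, a Kitaev-type $\Omega(\varepsilon/T^3)$ gap, and an $N+o(N)$ qubit count. The genuine gap is the clock. You only say you ``would try'' a hierarchical clock whose legal space is cut out by illegal-pattern projectors, and then assert locality $5$. The real question is not whether those penalties are satisfiable, but whether the clock can be built from local projectors at all. This is exactly where the clock of \cite{ChiaHLGY26} fails for QSAT, and the paper's proof consists mainly of fixing it.

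\emph{Enforcing the pulse register.} A $1$-locally readable pulse register needs ``at least one $1$''. No sum of local projectors on that register alone can exclude $\ket{0\cdots0}$: a term on $\le 5$ of the $S_K>5$ qubits that annihilates every weight-one string also annihilates the all-zero string. Chia et al.\ use frustrated terms here. The paper instead couples the pulse register $P$ to a unary register $U$ via $\Hunary$, $\Hinit$ and $\Hsync$. This forces $U_1=1$ and forces $P$ to carry a $1$ exactly where the unary string ends.

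\emph{Projector transitions.} Each transition term must be a projector on the whole space, not merely agree with the Kitaev propagation term on the legal subspace. The unary increment used in \cite{ChiaHLGY26} is not a projector. The paper replaces the fast digit by Bravyi's four-state clock, with two qubits per particle. Its $\PiLshift_{\ell,\ell+1}$ is a $4$-qubit projector, and its $\PiLgate_{\ell}(U)$ touches one particle plus the gate's two qubits. Tensoring either with $\ketbrab{1}_{P_k}$ gives exactly $5$ qubits. The slow-digit step $\PiKinc_{k,k+1}\otimes\ketbrab{a_2}_{2S_L-1,2S_L}$ is $3+2=5$.

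\emph{Rollover.} The fast digit cannot be reset locally when the slow digit advances. The paper therefore snakes: it runs the $L$ clock backwards for even $k$, using $\PiLgate_{S_L-\ell+1}(U^\dagger)$ so that forward time still applies $U$. This makes the carry condition a single $2$-qubit pattern.

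Without these ingredients, or a substitute, you have not produced a Quantum $5$-SAT instance. Your count of ``two or three'' clock qubits plus ``two or three'' data qubits does not by itself give $5$.

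Two smaller points.

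\emph{Clause count.} A two-digit clock costs $\Theta(\sqrt{T})$ qubits, so you need $T=o(N^2)$. With $M=\poly(N)$ clauses and $T=\poly(N)$, as you write, the clock is not $o(N)$. Your final count $N+O(\sqrt N\,\mathrm{polylog}\,N)$ silently assumes $T=\tilde O(N)$. That requires starting from $k$-SAT with $O(N)$ clauses, which is the form of (Q)SETH the paper uses, obtained via sparsification. You then need a verifier with $\tilde O(N)$ gates: Chia et al.'s $O(N\log^2 N)$ suffices, and \cref{lem:U_compute_Phi} gives $O(N)$.

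\emph{The AND flag.} The ``one-bit AND flag'' alternative is not reversible. Once the clause bit is uncomputed, $(x,f)\mapsto(x,f\wedge C_i(x))$ is not injective. Keeping the clause bits as garbage instead would cost $M$ ancillas, so you genuinely need a counter.
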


In fact, we show that the hardness survives even when all but $O(\sqrt n)$ qubits have constant degree and the remaining $O(\sqrt n)$ qubits have degree only $O(\sqrt n)$ (\cref{cor:hardnesssmallhigh}). Thus a vanishing fraction of exceptional qubits can carry enough total interaction mass to restore the full brute-force hardness.

Combined with \cref{cor:informalsmallhigh}, this gives a(n almost) sharp degree-profile picture.
If the number of exceptional qubits is $O(n^a)$ and their degrees are $O(n^b)$, then our algorithm handles the entire region $a+b<1$, while the boundary point
\[
    a=b=\frac12
\]
already supports (Q)SETH-hardness.

\subsection{Our techniques}

\paragraph*{Upper bounds}

The starting point is the frustration-free condition. If $H=\sum_{i=1}^{m}h_i$ is frustration-free and $\ket{\psi}$ is a ground state, then $h_i\ket{\psi}=0$ for every local term $h_i$. Thus, on the support of each $h_i$, the state $\ket{\psi}$ is entirely supported on $\ker(h_i)$. We exploit this by choosing a large collection of local terms whose supports are pairwise disjoint. For these terms, the corresponding subsystems can be restricted simultaneously to their local kernels. Each such restriction reduces the local dimension, and a sufficiently large collection therefore yields an exponential reduction in the dimension of the global search space. 

In the case that $D_{\text{max}} = O(\mu)$, we can use a greedy coloring argument to obtain such a sufficiently large collection. When $D_p = O(\mu)$, we first discard all constraints acting on any qubit with $d_q \ge \Theta$ and show that for an appropriate choice of $\Theta$ we can again use a greedy coloring argument on the remaining instance to obtain the sufficiently large collection.

To turn this reduction into a correct algorithm, we represent the restricted space as the image of an isometry $V$ and define the compressed Hamiltonian by
\[
H_{\mathrm{eff}}=V^\dagger H V.
\]
On YES instances, every frustration-free ground state lies in the image of $V$, and thus a zero-energy ground state is preserved by the compression. On the other hand, $H_{\mathrm{eff}}$ is simply the restriction of $H$ to the subspace $\operatorname{Im}(V)$. Since minimizing over a smaller set of states cannot decrease the minimum energy, we have
\[
\lambda(H_{\mathrm{eff}})\ge \lambda(H).
\]
Hence the compression preserves the zero ground energy of YES instances while maintaining the promise gap on NO instances. Therefore, it suffices to solve the local Hamiltonian problem for the compressed Hamiltonian $H_{\mathrm{eff}}$.
Since $H_{\mathrm{eff}}$ acts on a Hilbert space of exponentially smaller dimension, applying standard ground-energy estimation algorithms to $H_{\mathrm{eff}}$ yields the desired exponential speedup for the original Hamiltonian $H$.

\paragraph*{Lower bounds}
To establish our lower bound we broadly follow Chia, Hasegawa, Le Gall and Shen's proof of SETH hardness of the local Hamiltonian problem \cite{ChiaHLGY26}. That is, we construct a compressed clock that allows reducing a SAT formula with $n$ variables and $O(n)$ clauses to a quantum SAT instance on $n + o(n)$ qubits. An $\myO{2^{(1-\varepsilon)n}}$ time algorithm for the latter would thus contradict SETH.

The techniques from \cite{ChiaHLGY26} do not all transfer straightforwardly to the quantum SAT case. Their clock construction uses a Hamiltonian for enforcing the clock states that is not frustration free and a transition operator that is not a projector. As a result, these are not legal quantum SAT terms. To circumvent these issues we overhaul their clock. We use a combination of a one-hot clock coupled with a unary clock, and a 4-state clock due to Bravyi \cite{bravyi06}. We will still need more locality, which is why our result holds for $5$-local quantum SAT compared to \cite{ChiaHLGY26} $3$-local Hamiltonians.

To establish that (Q)SETH hardness remains even with only $O(\sqrt{n})$ qubits with degree $O(\sqrt{n})$ we construct a new efficient quantum circuit for verifying SAT. The key parameters are that this circuit consists of only $O(n)$ elementary gates and acts on $O(\sqrt{n})$ counter qubits at most $O(\sqrt{n})$ times. The remaining qubits will be acted on only a constant number of times\footnote{This assumes that every variable occurs a bounded number of times in the SAT formula. This can be ensured using the Sparsification Lemma \cite{Impagliazzo+08}.}. 

Chia et al. \cite{ChiaHLGY26} verify a SAT formula by sequentially checking the clauses and incrementing a counter for every true clause. In their verification circuit, all counter qubits are acted on every time any clause is updated resulting in a gate complexity of $O(n \log^2 n)=\Tilde{O}(n)$. We improve this construction by observing that we do not need to know the number of true clauses, only if all clauses are true or not. Accordingly, we can modify the counter updates by applying it only to the qubits needed to make this distinction and showing a suitable ``rigidity'' property. This leads to the gate complexity of $O(n)$ instead of $O(n \log^2 n)$ (\cref{lem:U_compute_Phi}). We apply the same idea to a verification circuit with two $\sqrt{n}$-bit one-hot counters\footnote{Here we apply the concept of the compressed clock in the circuit-to-Hamiltonian construction in \cite{ChiaHLGY26} for the verification circuit for SAT.}, and obtain the desired verification circuits for SAT (\cref{thm:two-unary-sat-verifier}). 

\subsection{Related work}

The use of a large independent set to obtain an exponential speedup is inspired by the 3-SAT algorithm of Hofmeister, Schöning, Schuler, and Watanabe \cite{Hofmeister+2007}, which exploits a large collection of pairwise variable-disjoint clauses to improve Schöning's randomized local-search algorithm. Our approach applies a related principle in the quantum setting. We select pairwise disjoint local terms and, using frustration-freeness, restrict their subsystems to the corresponding local kernels. A linear-size independent set, therefore, yields an exponential reduction in the dimension of the search space.

Buhrman et al. \cite{Buhrman+PRL25} showed classical and quantum algorithms running in time $\myO{2^{n(1-\Omega(\delta/M))}}$ and $\myO{2^{n(1-\Omega(\delta/M))/2}}$, respectively, where $M:=\sum_{i}\|h_i\|$, to estimate the ground-state energy of any $O(1)$-local Hamiltonian up to an additive error $\delta$. To obtain faster exponential time algorithms, we need $\delta=\Omega(M)=\poly(n)$ additive error; in other words, a rough approximation of the ground state energy. On the other hand, we can achieve an exponential speed-up with our algorithm to estimate the ground state energy with $1/\poly(n)$-additive error, while we need the two assumptions for Hamiltonians.

The role of the degree distribution has an independent and, in our view, illuminating parallel in recent work of Le Gall and Tamaki on dequantizing short-path algorithms for exact MAX-$k$-CSP \cite{gall2026dequantizing}. In the unweighted setting, their discussion uses two instance parameters. The first is the normalized degree irregularity
\[
    \mathcal D
      =
    \frac{n\sum_{q=1}^n d_q^2}{k^2m^2}.
\]
For a perfectly balanced instance $\mathcal D=1$; concentration of incidence mass makes $\mathcal D$ larger. The second parameter, usually denoted $\Delta$, measures the normalized separation between the optimum and the average objective value of a random assignment.  Their results give faster-than-brute-force algorithms in a regime with mild degree irregularity and constant optimum--average separation.

For frustration-free quantum $k$-SAT, the analogous optimum--average separation is automatic: the ground-state energy is zero, whereas the maximally mixed state has energy at least $m/2^k$.
What remains is the degree condition, and even the quantitative statistic is the same. With
\[
    \mu=\frac{km}{n},
    \qquad
    D_2=\left(\frac1n\sum_{q=1}^n d_q^2\right)^{1/2},
\]
we have
\[
    \mathcal D
       =
    \frac{n\sum_q d_q^2}{k^2m^2}
       =
    \left(\frac{D_2}{\mu}\right)^2.
\]
Equivalently, if $\rho_2=\mu/D_2$, then $\mathcal D=\rho_2^{-2}$.
Thus the same normalized second moment of the degree distribution emerges in classical exact optimization and in our frustration-free quantum algorithm.

This connection also points beyond the present paper.  For general, frustrated Local Hamiltonians, degree balance alone should not be expected to suffice. A natural open problem is to identify a quantum analogue of the optimum--average separation condition and determine whether the combination of that condition with degree balance yields faster exact algorithms for general Local Hamiltonian. Our results settle the frustration-free endpoint of this question: there the separation is automatic, degree balance yields a speedup, and sufficiently concentrated degree mass can restore SETH-hardness.

\subsection{Open questions}

We list some open questions left by our work.

\begin{itemize}
    \item In this work, we show faster exponential-time algorithms for quantum SAT with regular or geometrically local interactions. Can we obtain a faster exponential-time algorithm to solve the non-frustration-free local Hamiltonian problem whose interactions are geometrically local? Some such local Hamiltonian problems are known to be $\QMA$-hard \cite{Oliveira+08,Biamonte+08,Cubitt+16}.
    \item In this paper, we show the (Q)SETH-hardness of quantum $5$-SAT. Can we improve the locality of $5$ to $4$ and $3$? Note that quantum $2$-SAT is in $\PLang$ \cite{bravyi06}.
    \item What is the complexity of quantum 3-SAT where all $3$-local constraints overlap in a single qubit?
    \item \cref{cor:informalsmallhigh} works when $a + b < 1$, whereas we show (Q)SETH hardness for $a = b = \frac{1}{2}$. Can (Q)SETH hardness also be shown for other combination of $a$ and $b$ summing to $1$? By splitting the clock state over 3 or more registers, instead of 2, it might be possible to show (Q)SETH hardness for decreased $a$ at the cost of increasing $b$ and the locality.
\end{itemize}

\section{Preliminaries}

We use the standard notations in quantum computing and complexity theory. We refer to \cite{NC10,Gharibian2015,deWolf19} for standard references. We denote by $\mathcal{B}\left(\mathbb{C}^{2^n}\right)$ the set of pure states of $n$ qubits.

\subsection{Local Hamiltonian problem} 
A Hamiltonian is a Hermitian operator acting on a quantum register.
Let $H$ be a Hamiltonian, we call the eigenvalues of $H$ the energies of the Hamiltonian.
We denote by $\lambda(H)$ the ground-state energy of $H$, and we denote by the corresponding eigenstate(s) the ground state(s) of $H$. 
We use $\|H\|$ to denote the operator norm of $H$, that is, the largest absolute value of the eigenvalues of $H$.

The $k$-local Hamiltonian is defined as follows.
\begin{definition}[$k$-Local Hamiltonian]\label{def:klocalH}
A Hamiltonian $H$ acting on $n$ qubits is $k$-local if $H$ can be written as $H = \sum_{i=1}^m h_i\footnote{We assume that all the elements of local terms are some constants independent in $n$.}$ for $m=\poly(n)$ and for all $i$, the following holds.
\begin{itemize}
    \item $h_i$ is a positive semidefinite operator.\footnote{
The standard definition of the Local Hamiltonian problem allows each local term $h_i$ to be an arbitrary Hermitian operator. However, replacing
\[
h_i \mapsto h_i-\lambda_{\min}(h_i)I
\]
makes $h_i$ positive semidefinite. This transformation shifts the spectrum of the total Hamiltonian by a known additive constant, and the promise thresholds can be adjusted accordingly. Hence restricting attention to positive-semidefinite local terms incurs no loss of generality.
}
    \item $h_i$ non-trivially acts on at most $k$ qubits.
    \item $\|h_i\|\le \poly(n)$.
\end{itemize} 
\end{definition}

We call $k$ the locality of $H$, and define the $k$-local Hamiltonian problem formally.

\begin{definition}[$k$-local Hamiltonian problem]
    \label{def:LHP}
    The local Hamiltonian problem is a decision problem that asks whether the ground-state energy of a $k$-local Hamiltonian is greater or less than given energy thresholds.
    \begin{itemize}
        \item \textbf{Inputs:} a $k$-local Hamiltonian $H$ acting on $n$ qubits where $k=O(1)$, and two energy thresholds $a,b$ satisfying $b-a\ge \frac{1}{\poly(n)}$.
        \item  \textbf{Outputs:}
        \begin{itemize}
            \item YES, if there exists a quantum state $\ket{\psi}\in\mathcal{B}\left(\mathbb{C}^{2^n}\right)$ such that $\bra{\psi}{H}\ket{\psi}\le a$.
            \item NO, if $\bra{\psi}{H}\ket{\psi}\ge b$ for all $\ket{\psi}\in\mathcal{B}\left(\mathbb{C}^{2^n}\right)$.
        \end{itemize}
    \end{itemize}
\end{definition}

The 2-local Hamiltonian problem is known to be $\QMA$-hard \cite{Kempe+06}, and several exponential-time classical and quantum algorithms are known for the problem \cite{KW92,Kerzner+24}, which can be summarized as follows. 

\begin{fact}\label{fact:exponential-time algo}
Let $H$ be a Hermitian matrix acting on a $D$-dimensional Hilbert
space for $D \leq 2^n$. Suppose $H$ is $\poly(n)$-sparse and $H$ is given in the standard sparse-access model. Suppose further that $\|H\| \le \poly(n)$.

Then, for every $\varepsilon = \frac{1}{\poly(n)}$,
there exist an $O^*(D)$-time randomized classical algorithm and an
$O^*(\sqrt D)$-time quantum algorithm that output, with high
probability, an estimate $\widetilde{\lambda}$ satisfying
\[
\left|
\widetilde{\lambda}
-
\lambda(H)
\right|
\le
\varepsilon.
\]
\end{fact}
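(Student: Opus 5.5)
The plan is to reduce both algorithms to estimating the smallest eigenvalue of a sparse Hermitian matrix given by oracle access. For the classical bound, use the power method or Lanczos: starting from a random vector, apply $H$ (equivalently, work with a spectrally shifted and rescaled $\|H\|I - H$) repeatedly. Each matrix--vector product costs $O(D\cdot s)$ time, where $s=\poly(n)$ is the sparsity, since each row is obtained with $s$ sparse-access queries.

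\textbf{Classical algorithm.} Lanczos with $T=\poly(n)\cdot\log D$ iterations, or shifted power iteration with $T=\poly(n)/\varepsilon$, produces a Ritz value within $\varepsilon$ of $\lambda(H)$ with high probability. Here we use $\|H\|\le\poly(n)$ and the Kuczy\'nski--Wo\'zniakowski bounds \cite{KW92} for a random start. These bounds depend only on $\log D$ and on the ratio $\|H\|/\varepsilon=\poly(n)$, not on a spectral gap. The total time is $O(D\cdot s\cdot T)=O^*(D)$, with $O^*(D)$ memory.

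\textbf{Quantum algorithm.} Block-encode $H/(s\|H\|_{\max})$ from the sparse-access oracles, with normalization $\alpha=\poly(n)$. Use quantum singular value transformation to build an approximate projector onto eigenvalues below a threshold $\tau$, with polynomial degree $O(\alpha/\varepsilon)$. Apply it to the maximally mixed state, or to half of a maximally entangled state on $\lceil\log D\rceil$ qubits, padding the dimension to a power of two and assigning the padding a large energy. The acceptance probability is then at least $1/D$ whenever $\lambda(H)\le\tau$, and essentially zero whenever $\lambda(H)\ge\tau+\varepsilon$. Amplitude amplification distinguishes these two cases with $O(\sqrt D)$ repetitions, each costing $\poly(n)$. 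A binary search over $\tau$ with $O(\log(\|H\|/\varepsilon))$ steps then yields $\widetilde\lambda$. This is the framework of \cite{Gilyen+STOC19,Lin+20}, in the same way it is used for $k$-LH in \cite{Kerzner+24}.

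\textbf{Main obstacle.} The delicate step is the quantum side with a general dimension $D$. The register has $\lceil\log D\rceil$ qubits, so the padded states must not create spurious low energies. I would handle this by adding a penalty $\|H\|+1$ on the out-of-range basis states, which is itself a sparse-access term.

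I would also control the failure probability of the binary search. Amplifying each test to error $1/\poly(n)$ and taking a union bound over the $O(\log(\|H\|/\varepsilon))$ test calls suffices, at polylog overhead.
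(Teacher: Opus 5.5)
Your proposal is correct and is essentially the argument the paper relies on. The paper gives no proof of its own: it cites the random-start power/Lanczos analysis of \cite{KW92} for the $O^*(D)$ classical bound, and the block-encoding/QSVT (or phase-estimation) plus amplitude-amplification line of work, e.g.\ \cite{Gilyen+STOC19,Lin+20,Kerzner+24}, for the $O^*(\sqrt{D})$ quantum bound, which is exactly what you reconstruct (your padding penalty for general $D$ is the right fix). The only quantitative detail to make explicit is that ``essentially zero'' must mean an acceptance probability below, say, $1/(4D')$, where $D'<2D$ is the padded dimension; this forces threshold-polynomial error $\delta=O(D^{-1/2})$ and hence degree $O((\alpha/\varepsilon)\log D)$, which is still polynomial in $n$. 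Likewise, the expectation bounds of \cite{KW92} become high-probability bounds by repeating and keeping the smallest estimate, since Ritz values never fall below $\lambda(H)$.
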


In this paper, we focus on frustration-free Hamiltonians.

\begin{definition}[Frustration-free Hamiltonian]
Let $H=\sum_{i=1}^{m} h_i$ be a $k$-local Hamiltonian. We say that $H$ is \emph{frustration-free} if there exists a quantum state
$\ket{\psi}\in\mathcal{B}\left(\mathbb{C}^{2^n}\right)$ such that
\[
h_i\ket{\psi}=0
\]
for all $i\in[m]$.
\end{definition}

For positive-semidefinite local terms,
$H \text{ is frustration-free}$ if and only if $\lambda(H)=0$.
We define the frustration-free local Hamiltonian problem as the special case of the $k$-local Hamiltonian problem
with $a=0$.

We next define the quantum $k$-SAT problem.

\begin{definition}[Quantum $k$-SAT Hamiltonian]
A quantum $k$-SAT Hamiltonian is a $k$-local Hamiltonian
\[
H=\sum_{i=1}^{m}\Pi_i
\]
where each $\Pi_i$ is a projector (i.e., $\Pi_i^2=\Pi_i$) acting non-trivially on at most $k$ qubits.
\end{definition}

\begin{definition}[Quantum $k$-SAT Problem]
\label{def:qksat}
Given a quantum $k$-SAT Hamiltonian $H=\sum_{i=1}^{m}\Pi_i$,
and a threshold $b\ge \frac{1}{\poly(n)}$, decide which of the following holds.
\begin{itemize}
    \item \textbf{YES:} $H$ is frustration-free.
    \item \textbf{NO:} $\lambda(H)\ge b$.
\end{itemize}
\end{definition}

Since every projector is positive semidefinite, every quantum
$k$-SAT Hamiltonian is a $k$-local Hamiltonian with
positive-semidefinite local terms. Quantum $k$-SAT is therefore a special case of the
frustration-free $k$-local Hamiltonian problem.

\subsection{SAT, SETH and QSETH}

A Boolean variable takes a value in ${0,1}$. A Boolean formula is an expression over variables using the logical connectives $\neg$ (NOT), $\vee$ (OR), and $\wedge$ (AND).

Let $\Phi$ be a Boolean formula over variables $x_1,\ldots,x_n$. For an assignment $x\in\{0,1\}^n$, the variable $x_i$ is assigned the $i$th bit of $x$. We write $\Phi(x)$ for the value of $\Phi$ under the assignment $x$.

We define the conjunctive normal form formula as follows.
\begin{definition}[$k$-CNF formula]
A Boolean formula $\Phi$ over variables $x_1,\ldots,x_n$ is a $k$-CNF formula if
\[
    \Phi=\bigwedge_{i=1}^{m} C_i,
\]
where $m=\poly(n)$ and each clause $C_i$ is a disjunction of at most $k$ literals. A literal is either a variable $x_j$ or its negation $\neg x_j$.
\end{definition}

\begin{definition}[$k$-SAT]
Given a $k$-CNF formula $\Phi$, the $k$-SAT problem asks whether there exists an assignment $x\in\{0,1\}^n$ such that $\Phi(x)=1$.
\end{definition}

Though the exact runtime lower bound for $\kSAT$ is still unknown, it is widely believed that the brute-force search is optimal for classical algorithms and the Grover search is optimal for quantum algorithms. The Strong Exponential Time Hypothesis (SETH) \cite{Impagliazzo+01} states that $\kSAT$ needs roughly $2^n$ time for large $k$:
\begin{conjecture}[SETH]\label{conjecture:SETH}
	For all $\varepsilon$, there is some $k\ge 3$ such that $\kSAT$ with $n$ variables and $O(n)$ clauses cannot be solved classically in time $O(2^{(1-\varepsilon)n})$.
\end{conjecture}

A quantum analog was proposed in \cite{Aaronson+CCC20,Buhrman+21} as Quantum Strong Exponential-Time Hypothesis (QSETH).

\begin{conjecture}[Quantum strong exponential time hypothesis (QSETH) \cite{Aaronson+CCC20,Buhrman+21}]\label{conjecture:QSETH}
	For all $\varepsilon$, there is some $k\ge 3$ such that $\kSAT$ with $n$ variables and $O(n)$ clauses cannot be solved quantumly in time $O(2^{(1-\varepsilon)n/2})$.
\end{conjecture}

In the two conjectures above, we can assume the number of clauses to be $O(n)$ via the sparsification lemma \cite{Impagliazzo+08}. 

Lemma 23 in \cite{ChiaHLGY26} showed an efficient quantum circuit to verify $\kSAT$ with the linear number of clauses. The number of gates in the verification circuits was quasi-linear in $n$ ($O(n \log^2 n)$), and we show we can improve it to linear in $n$ by a careful analysis of the gate counts and observing a ``rigidity'' of the counter. See \cref{appendix} for a proof. In this paper, we use the Clifford+$T$ gate set as our elementary gate set, as in Chia et al.~\cite{ChiaHLGY26}.

\begin{lemma}\label{lem:U_compute_Phi}
    For any positive integer $k$, and any $\kCNF$ formula $\Phi$ that contains $n$ variables and $m=O(n)$ clauses, there exists a quantum circuit $U_\Phi$ that acts on $n$ input qubits and $O(\log n)$ ancilla qubits, and $U_\Phi$ consists of $O(n)$ elementary gates such that $U_{\Phi}\ket{x}\reg{in}\otimes\ket{0}\reg{anc}=\ket{\Phi(x)}\reg{out}\otimes\ket{\psi_x}\reg{\overline{out}}$ for all $x\in\{0,1\}^n$, where $\ket{\psi_x}$ is some quantum state depending on $x$.
    The construction of $U_{\Phi}$ can be done in $\poly(n)$ time.

    The circuit $U_\Phi$ will be such that the $i$-th input qubits is acted upon only $O(d_i)$ times, where $d_i$ is the number of clauses acting on the $i$-th variable in $\Phi$. The ancilla qubits will be acted upon $\Tilde{O}(n)$ times. 
\end{lemma}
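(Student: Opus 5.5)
We build $U_\Phi$ as a sequential clause checker that feeds a binary counter, and we exploit the fact that only the event ``all $m$ clauses are true'' matters, not the exact count. The ancilla register holds two parts. The first is a single clause-flag qubit $f$. The second is a counter register $c$ of $L=\lceil\log_2(m+1)\rceil=O(\log n)$ qubits, initialized to $\ket{0}$.

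For each clause $C_j$, in order, we perform three steps:
\begin{itemize}
  \item[(i)] Compute $C_j(x)$ into $f$ with a constant-size Clifford+$T$ circuit: a multi-controlled Toffoli on the at most $k$ literal qubits, with $X$ conjugations for polarity and a final $X$ so that $f$ holds the OR. This uses one or two work qubits, which are recycled.
  \item[(ii)] Apply a controlled increment $c\mapsto c+1 \pmod{2^L}$, controlled on $f$.
  \item[(iii)] Uncompute $f$ by running step (i) in reverse.
\end{itemize}
After all $m$ clauses, $c$ holds $\#\{j: C_j(x)=1\}\le m$. We then compute into an output qubit the predicate $[c=m]$, using a comparator against the classical constant $m$. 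This comparator costs $O(L)$ gates, namely $X$ gates on the zero bits of $m$ followed by an $L$-controlled Toffoli decomposed with $O(L)$ clean ancillas. Since $c=m$ holds exactly when $\Phi(x)=1$, the output is $\ket{\Phi(x)}$. Every operation is a permutation of computational basis states, so the final state is a basis state $\ket{\Phi(x)}\otimes\ket{\psi_x}$.

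The degree bound holds by construction. Input qubit $i$ is touched only in steps (i) and (iii) for clauses containing $x_i$, and each such clause costs $O(1)$ gates, giving $O(d_i)$ uses. The whole construction is clearly computable in $\poly(n)$ time.

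The main obstacle is the total gate count. A naive controlled increment on $L$ bits costs $\Theta(L)$ or more gates, which gives $O(n\log n)$ overall, or $O(n\log^2 n)$ with the decomposition used in \cite{ChiaHLGY26}. We bring this down to $O(n)$ through a rigidity argument on the counter. The ripple-carry increment flips bit $t$ only when bits $0,\dots,t-1$ are all $1$. We implement the increment as a ladder: for $t=L-1$ down to $0$, flip bit $t$ controlled on $f$ and on bits $0..t-1$. We then use a carry-ladder decomposition whose cost, summed over a sequence of increments, equals the number of bit flips, plus $O(1)$ per increment. This is the standard amortization, in which a binary counter performs $O(m)$ total bit flips over $m$ increments.

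This amortization, however, concerns cost in a data-dependent sense, whereas a circuit must fix its gates statically. The device that fixes the gates in advance is rigidity. After step $j$ the counter value is at most $j$. Moreover, if any earlier clause has failed, the value is strictly below $j$, and the final answer is already determined to be $0$. We may therefore abandon exactness on such branches. We modify step $j$ so that its increment touches only the low-order bits in which the values $j-1$ and $j$ differ as binary numbers, that is, the bits in which the all-true trajectory $c=j-1\mapsto j$ changes. We realize this as a multi-controlled flip conditioned on the lower bits matching the all-true pattern of $j-1$.

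On the all-true branch the counter follows $0,1,\dots,m$ exactly. On any other branch the counter may become garbage, but we must show it can never reach $m$ at the end. This is the rigidity claim: once the true-count falls behind, the restricted updates cannot re-synchronize the counter with the all-true trajectory. We would prove it by an invariant stating that either $c=j$, or the state $c$ is permanently ``off-pattern''. The latter requires choosing the controls so that off-pattern states map to off-pattern states.

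With this restriction, the cost of step $j$ is $O(1+\nu_2(j))$, where $\nu_2(j)$ is the number of bits flipped going from $j-1$ to $j$. Summing gives $\sum_j O(1+\nu_2(j))=O(m)=O(n)$. The ancilla qubits are touched $\tilde O(n)$ times in total, as the statement requires.
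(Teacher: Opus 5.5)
Your architecture matches the paper's: a clause flag, a step-$j$ update confined to the $t_j=1+\nu_2(j)$ low-order bits in which $j-1$ and $j$ differ, a comparator against the constant $m$, and the count $\sum_j t_j<2m$. The gap is the step that makes the $O(n)$ bound correct: that an off-trajectory counter can never end at $m$. You leave this as ``we would prove it by an invariant'', with the controls still to be chosen so that ``off-pattern states map to off-pattern states''.

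This matters because the update is not pinned down, and the natural readings behave differently. Complementing the low $t_j$ bits exactly when they equal $0\,1^{t_j-1}$ sends both $0\,1^{t_j-1}$ and $1\,0^{t_j-1}$ to $1\,0^{t_j-1}$, so it is not unitary. Dropping the condition, i.e.\ XOR-ing $(j-1)\oplus j$ into the counter whenever $C_j$ holds, is unitary but unsound. Take $m=3$ with $C_1$ and $C_3$ false and $C_2$ true: the counter runs $0,0,3,3$ and ends at $m$. So resynchronization is a real danger. Once the update is restricted, your bound $c\le j$ and the claim that the value is strictly below $j$ after a failure no longer hold automatically; both were true only for the exact increment.

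The missing idea is a property to impose on the update: on every basis state it must change the integer value of the counter by at most $+1$. The paper uses the involution that swaps $p\,0\,1^{t-1}\leftrightarrow p\,1\,0^{t-1}$ for every prefix $p$ (these are adjacent integers) and fixes everything else. It implements this with $O(t)$ gates: compute $q_s\oplus q_{t-1}$ for $s<t-1$, AND these with the clause flag, complement the suffix under the resulting flag, then uncompute.

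Your top-down carry ladder restricted to bits $0,\ldots,t_j-1$, i.e.\ increment modulo $2^{t_j}$, would also work. Its only move other than $+1$ is the wrap $1^{t_j}\mapsto 0^{t_j}$, which lowers the value, and it can be done with $O(t_j)$ gates using prefix ANDs in clean ancillas.

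With the at-most-$+1$ property, the deficit $j-c_j$ is non-decreasing and increases by one at every false clause. Hence $c_m=m$ if and only if $\Phi(x)=1$. This one-line monotonicity argument is what should replace your unspecified off-pattern invariant.
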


\section{Faster exponential-time algorithms}

In this section, we present faster exponential-time algorithms for the frustration-free local Hamiltonian problem.
In \cref{subsec:compression}, we first show how to compress a local Hamiltonian into one acting on a smaller-dimensional Hilbert space. In \cref{subsec:construct an independent set}, we then explain how to construct efficiently a heavy family of local terms whose supports are pairwise disjoint. Finally, in \cref{subsec:main result}, we apply the compression to obtain our faster exponential-time algorithms.

Throughout this section, we assume without loss of generality that there are no $0$-local terms, no terms with trivial kernel and no qubits that are only acted on trivially. For each local term $h_i$, we denote by $\ker(h_i)$ and $\rank(h_i)$ the kernel of $h_i$ and rank of the local matrix acting on $\operatorname{supp}(h_i)$.

\subsection{Compression theorem}
\label{subsec:compression}

\begin{theorem}[Compression Theorem]
\label{thm:compression}
Let $H=\sum_{i=1}^{m} h_i$ be a $k$-local Hamiltonian on $n$ qubits, where $k=O(1)$.
Suppose we are given a set
\[
\mathcal S
\subseteq
[m]
\]
of local terms whose supports are pairwise disjoint.

For each $s\in\mathcal S$, define $k_s:=\left|\operatorname{supp}(h_s)\right|$ and $g_s:=\dim\ker(h_s).$
Suppose that $g_s\ge 1$ for every $s\in\mathcal S$, and define
\[
M
:=
\left\{2^{(n-\sum_{s\in\mathcal S}k_s)}\right\}
\prod_{s\in\mathcal S}g_s.
\]

Then there exists a deterministic classical algorithm, running in
polynomial time in $n$ and $m$, that computes

\begin{enumerate}
    \item a succinct description of an isometry
    \[
    V
    :
    \mathbb C^M
    \longrightarrow
    (\mathbb C^2)^{\otimes n},
    \]

    \item an effective Hamiltonian
    \[
    H_{\mathrm{eff}}
    :=
    V^\dagger H V,
    \]
    represented as a sum of efficiently computable local terms.
\end{enumerate}

Furthermore, $\lambda(H_{\mathrm{eff}}) \ge \lambda(H)$, and 
if $\lambda(H)=0$, then $\lambda(H_{\mathrm{eff}})=0$.
\end{theorem}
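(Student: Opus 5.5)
Since the supports in $\mathcal S$ are pairwise disjoint, we build $V$ as a tensor product of local isometries. For each $s\in\mathcal S$, $h_s$ acts on $k_s\le k=O(1)$ qubits, so its $2^{k_s}\times 2^{k_s}$ local matrix can be diagonalized exactly (or to sufficient precision) in constant time. This gives an orthonormal basis $\{\ket{v_{s,1}},\dots,\ket{v_{s,g_s}}\}$ of $\ker(h_s)$ and the local isometry
\[
W_s:=\sum_{j=1}^{g_s}\ket{v_{s,j}}\bra{j}:\mathbb C^{g_s}\to(\mathbb C^2)^{\otimes k_s}.
\]
On the set $R$ of remaining qubits, with $|R|=n-\sum_s k_s$, we take the identity. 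We then define
\[
V:=\Bigl(\bigotimes_{s\in\mathcal S}W_s\Bigr)\otimes I_R,
\]
suitably reordered. This is an isometry from a space of dimension $\prod_s g_s\cdot 2^{|R|}=M$, since a tensor product of isometries is an isometry. The succinct description is the list of the $W_s$ together with the qubit ordering, and computing it takes $O(|\mathcal S|)=O(n)$ time.

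For $H_{\mathrm{eff}}$, we use linearity: $V^\dagger H V=\sum_i V^\dagger h_i V$. Each $h_i$ acts on at most $k$ qubits, so it meets the supports of at most $k$ blocks $s\in\mathcal S$ plus some free qubits. Since $W_s^\dagger W_s=I$, every block not touched by $h_i$ contracts to the identity. Hence $V^\dagger h_iV$ acts nontrivially only on the at most $k$ "qudit" registers $\mathbb C^{g_s}$ and free qubits it touches. Its local matrix is $(\bigotimes_{s\text{ touched}}W_s^\dagger)\,h_i\,(\bigotimes W_s)$, which has dimension at most $2^{k^2}$ and is computable in constant time. So $H_{\mathrm{eff}}$ is a sum of $m$ local terms on a mixed-dimension register of total dimension $M$. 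It is also $\poly(n)$-sparse with $\|H_{\mathrm{eff}}\|\le\|H\|$, which is what makes \cref{fact:exponential-time algo} applicable later. The main care point is this bookkeeping; I would state it so that sparse access to $H_{\mathrm{eff}}$ in a basis of size $M$, indexed by tuples $(j_s)_s$ together with bits on $R$, is efficient.

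For the spectral claims, the variational principle gives
\[
\lambda(H_{\mathrm{eff}})=\min_{\|\phi\|=1}\bra{\phi}V^\dagger HV\ket{\phi}=\min_{\ket\psi\in\operatorname{Im}V,\|\psi\|=1}\bra\psi H\ket\psi\ge\lambda(H),
\]
where the middle equality uses that $V\ket\phi$ is a unit vector because $V$ is an isometry. Now suppose $\lambda(H)=0$. Since every $h_i\succeq0$, a ground state $\ket\psi$ satisfies $h_i\ket\psi=0$ for all $i$, in particular for $s\in\mathcal S$. Then $\ket\psi\in\bigcap_{s}(\ker h_s\otimes\text{rest})$. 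Because the supports are disjoint, this intersection equals $\bigotimes_s\ker(h_s)\otimes(\mathbb C^2)^{\otimes R}=\operatorname{Im}V$. One checks this via the commuting projectors $P_s=W_sW_s^\dagger\otimes I$: their product is $VV^\dagger$, and $P_s\ket\psi=\ket\psi$ for every $s$. Setting $\ket\phi=V^\dagger\ket\psi$, we get $V\ket\phi=\ket\psi$, so $\bra\phi H_{\mathrm{eff}}\ket\phi=0$. Combined with $H_{\mathrm{eff}}\succeq0$, this gives $\lambda(H_{\mathrm{eff}})=0$.
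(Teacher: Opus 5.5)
Your proposal is correct and follows essentially the same route as the paper. Both build $V$ as a tensor product of kernel isometries $W_s$ (the paper's $B_s$) with the identity on the remaining qubits, compute $V^\dagger h_i V$ locally by conjugation, and use the variational principle for $\lambda(H_{\mathrm{eff}})\ge\lambda(H)$; for the frustration-free case you pull the ground state back through $V$, where your commuting-projector identity $\prod_s P_s = VV^\dagger$ is a slightly cleaner justification of the paper's explicit expansion of $\ket{\psi}$ in the kernel bases.
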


Since $h_s$ acts on $k_s$ qubits, we have
\[
g_s
=
\dim\ker(h_s)
=
2^{k_s}
-
\operatorname{rank}(h_s).
\]
Hence
\[
M
=
2^n
\prod_{s\in\mathcal S}
\frac{g_s}{2^{k_s}}
=
2^n
\prod_{s\in\mathcal S}
\left(
1-
\frac{\operatorname{rank}(h_s)}{2^{k_s}}
\right).
\]

\begin{proof}
Fix $s\in\mathcal S$. Since $H$ is $k$-local for $k=O(1)$, by the spectral decomposition of $h_s$, we can efficiently compute an orthonormal basis
\[
\left\{
\ket{v_{s,b}}
\right\}_{b=1}^{g_s}
\]
of $\ker(h_s)$.

Define
\[
B_s
:
\mathbb C^{g_s}
\longrightarrow
(\mathbb C^2)^{\otimes k_s}
\]
by
\[
B_s
=
\sum_{b=1}^{g_s}
\ket{v_{s,b}}\!\bra{b},
\]
where
\[
\left\{
\ket{b}
\right\}_{b=1}^{g_s}
\]
is the standard basis of $\mathbb C^{g_s}$. Since $B_s$ maps the standard basis (an
orthonormal basis) to the orthonormal basis
$\{\ket{v_{s,b}}\}_{b=1}^{g_s}$ of $\ker(h_s)$, it is an isometry and
\[
\operatorname{Im}(B_s)
=
\ker(h_s).
\]
Moreover, since every $\ket{v_{s,b}}$ belongs to $\ker(h_s)$,
\[
h_sB_s
=
0.
\]

Since the supports of the terms in $\mathcal S$ are pairwise disjoint,
we define, under the natural tensor-product identification induced by
these supports,
\[
V
=
\left(
\bigotimes_{s\in\mathcal S}B_s
\right)
\otimes
I_{\mathrm{rest}},
\]
where $I_{\mathrm{rest}}$ denotes the identity on the remaining
$n-\sum_{s\in\mathcal S}k_s$ qubits. 
As a tensor product of isometries, $V$ is an isometry.
Furthermore,
\[
\operatorname{Im}(V)
=
\left(
\bigotimes_{s\in\mathcal S}\ker(h_s)
\right)
\otimes
(\mathbb C^2)^{
\otimes
\left(
n-\sum_{s\in\mathcal S}k_s
\right)
},
\]
and its dimension is
\[
\left\{2^{(n-\sum_{s\in\mathcal S}k_s)}\right\}
\prod_{s\in\mathcal S}g_s = M.
\]

We now define
\[
H_{\mathrm{eff}}
:=
V^\dagger H V.
\]
Since $h_sB_s=0$, we have $V^\dagger h_sV=0$ for every $s\in\mathcal S$. 
Therefore,
\[
H_{\mathrm{eff}}
=
\sum_{\eta\notin\mathcal S}
V^\dagger h_\eta V.
\]

We next show that the projected local terms can be computed efficiently.
Fix $\eta\notin\mathcal S$. Since $h_\eta$ acts on at most $k$ qubits
and the supports in $\mathcal S$ are pairwise disjoint, the support of
$h_\eta$ intersects at most $k$ selected supports.

To compute
\[
V^\dagger h_\eta V,
\]
we first extend the local matrix of $h_\eta$ by identities on the
qubits inside the intersected selected supports on which $h_\eta$ does
not act. We then conjugate the resulting operator by the corresponding
isometries $B_s$. This computation involves only constantly many
constant-dimensional matrices and can therefore be performed
efficiently. Hence the local-term description of $H_{\mathrm{eff}}$
can be computed in polynomial time in $n$ and $m$.

It remains to prove the energy claims. For every normalized state
$\ket{\phi}\in\mathbb C^M$, since $V$ is an isometry, $V\ket{\phi}$ is
also normalized. Therefore,
\[
\begin{aligned}
\bra{\phi}H_{\mathrm{eff}}\ket{\phi}
&=
\bra{\phi}V^\dagger HV\ket{\phi}\\
&=
\bra{V\phi}H\ket{V\phi}\\
&\ge
\lambda(H).
\end{aligned}
\]
Taking the minimum over all normalized $\ket{\phi}$ gives
\[
\lambda(H_{\mathrm{eff}})
\ge
\lambda(H).
\]

Now suppose that
\[
\lambda(H)=0.
\]
Let $\ket{\psi}$ be a normalized ground state of $H$. Since every
local term is positive semidefinite,
\[
0
=
\bra{\psi}H\ket{\psi}
=
\sum_{i=1}^{m}
\bra{\psi}h_i\ket{\psi}
\]
implies
\[
\bra{\psi}h_i\ket{\psi}
=
0
\]
for every $i\in[m]$. Since $h_i$ is positive semidefinite, this implies $h_i\ket{\psi}=0$ for every $i\in[m]$.

In particular,
\[
h_s\ket{\psi}
=
0
\]
for every $s\in\mathcal S$. Let
\[
\mathcal S
=
\{s_1,\ldots,s_t\}.
\]
Since the supports of the terms in $\mathcal S$ are pairwise disjoint,
the state $\ket{\psi}$ can be expanded using the orthonormal bases
\[
\left\{
\ket{v_{s,b}}
\right\}_{b=1}^{g_s}
\]
of the spaces $\ker(h_s)$ as
\[
\ket{\psi}
=
\sum_{b_{s_1}=1}^{g_{s_1}}
\cdots
\sum_{b_{s_t}=1}^{g_{s_t}}
\left(
\bigotimes_{s\in\mathcal S}
\ket{v_{s,b_s}}
\right)
\otimes
\ket{\alpha_{(b_s)_{s\in\mathcal S}}},
\]
where each
$\ket{\alpha_{(b_s)_{s\in\mathcal S}}}$ is a vector on the remaining $(n-\sum_{s\in\mathcal S}k_s)$ qubits.\footnote{We use the convention
$(\mathbb C^2)^{\otimes 0}=\mathbb C$. Thus, if the selected supports
cover all qubits, each
$\ket{\alpha_{(b_s)_{s\in\mathcal S}}}$ is simply a complex
coefficient.}

Define
\[
\ket{\phi}
=
\sum_{b_{s_1}=1}^{g_{s_1}}
\cdots
\sum_{b_{s_t}=1}^{g_{s_t}}
\left(
\bigotimes_{s\in\mathcal S}
\ket{b_s}
\right)
\otimes
\ket{\alpha_{(b_s)_{s\in\mathcal S}}}.
\]
It is clear that $V\ket{\phi} = \ket{\psi}$. Since $V$ is an isometry and $\ket{\psi}$ is normalized, $\ket{\phi}$ is also normalized.
Therefore,
\[
\bra{\phi}H_{\mathrm{eff}}\ket{\phi} = \bra{\phi}V^\dagger H V \ket{\phi} = \bra{\psi}H\ket{\psi}=0.
\]
Since $H_{\mathrm{eff}}$ is positive semidefinite, we conclude that
\[
\lambda(H_{\mathrm{eff}})
=
0.
\]
\end{proof}

\subsection{Constructing a weighted disjoint family}
\label{subsec:construct an independent set}

To apply \cref{thm:compression}, we require a collection of local terms whose supports are pairwise disjoint. In the general positive-semidefinite setting, different local terms may yield different amounts of compression. We therefore assign a weight to each local term and seek a heavy disjoint family. 

Let $H=\sum_{i=1}^{m}h_i$ be a $k$-local Hamiltonian. For each $i\in[m]$, define
\[
    k_i:=\left| \operatorname{supp}(h_i) \right|
\]
and
\[
    g_i:=\dim\ker(h_i).
\]

If $g_i=0$ for some $i\in[m]$, then $h_i$ has no zero eigenvalue, and hence $H$
cannot be frustration-free.
In the remainder of this section, we assume that $g_i\ge 1$ for every $i\in[m]$.

For each local term $h_i$, define its weight by
\[
    w_i := \log_2 \frac{2^{k_i}}{g_i}.
\]

By \cref{thm:compression}, selecting a disjoint family $\mathcal S \subseteq [m]$ yields a compressed Hilbert space of dimension
\[
    M = 2^n \prod_{s\in\mathcal S} \frac{g_s}{2^{k_s}}
    = 2^{\,n-\sum_{s\in\mathcal S}w_s}.
\]

Hence, minimizing the compressed dimension is equivalent to maximizing
the total weight $\sum_{s\in\mathcal S}w_s$.

Since $g_i = 2^{k_i} - \operatorname{rank}(h_i)$, we may equivalently write
\[
w_i = -\log_2 \left( 1- \frac{\operatorname{rank}(h_i)}{2^{k_i}} \right).
\]

By the preprocessing above, every local term is nonzero.
Therefore, $\rank(h_i)\ge 1$ for every $i\in[m]$.
Since $k_i\le k$, we obtain
\[
    w_i
    =
    -\log_2\left(
        1-\frac{\rank(h_i)}{2^{k_i}}
    \right)
    \ge
    -\log_2\left(1-2^{-k}\right).
\]
For convenience, define
\[
    \beta_k
    :=
    -\log_2\left(1-2^{-k}\right).
\]
Thus,
\[
    w_i\ge \beta_k
\]
for every $i\in[m]$. Since $k=O(1)$, $\beta_k>0$ is a constant depending only on $k$.

We define the \emph{term-intersection graph}
\[
    G=([m],E)
\]
as follows. Two distinct vertices $i,j\in[m]$ are adjacent if and only
if
\[
    \operatorname{supp}(h_i) \cap \operatorname{supp}(h_j) \neq \emptyset.
\]
A subset
\[
    \mathcal S \subseteq [m]
\]
has pairwise disjoint supports if and only if it is an independent set of $G$.

For each qubit $q\in[n]$, let
\[
d_q
:=
\left|
\left\{
i\in[m]
:
q\in\operatorname{supp}(h_i)
\right\}
\right|
\]
denote the number of local terms acting on $q$. 

Denote the total weight of the graph $G$ as
\begin{align}
    w(G) = \sum_{i = 1}^m w_i
\end{align}
Finding a maximum-weight independent set is $\NP$-hard in general. For our purpose, however, it suffices to find an independent set whose total weight is a sufficiently large fraction of $w(G)$. We now discuss two methods of finding such a large weight set $\calS$.

\paragraph{Maximum-degree construction}

Define
\[
\Dmax
:=
\max_{q\in[n]}d_q.
\]
A bounded-interaction-degree assumption allows us to obtain such an independent set efficiently by a simple greedy coloring argument.

\begin{lemma}[Disjoint Family from Max Degree]
\label{lem:weighted-disjoint-family-max-degree}
There exists a deterministic polynomial-time algorithm that computes
a pairwise disjoint family
\[
    \mathcal S \subseteq [m]
\]
satisfying
\[
    \sum_{i \in \calS} w_i \ge \frac{1}{k\Dmax} w(G).
\]
\end{lemma}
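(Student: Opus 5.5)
We bound the degree of every vertex of the term-intersection graph $G$, greedily colour $G$, and keep the heaviest colour class. Each colour class is an independent set, so the pigeonhole principle over the colours gives the required weight.

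\textbf{Degree bound.} Fix a term $i$. Its support has $k_i\le k$ qubits. Each qubit $q\in\operatorname{supp}(h_i)$ lies in $d_q\le \Dmax$ terms, one of which is $h_i$ itself. Every neighbour $j$ of $i$ in $G$ shares at least one qubit with $h_i$, so it is counted at least once in the union over the qubits of $\operatorname{supp}(h_i)$. Hence
\[
\deg_G(i)\le \sum_{q\in\operatorname{supp}(h_i)}(d_q-1)\le k(\Dmax-1).
\]

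\textbf{Greedy colouring.} Process the vertices of $G$ in any order and give each one the smallest colour not used by an already-coloured neighbour. This is a proper colouring with at most $\Delta(G)+1\le k(\Dmax-1)+1\le k\Dmax$ colours, using $\Dmax\ge1$ and $k\ge1$. The colour classes $C_1,\dots,C_c$ with $c\le k\Dmax$ partition $[m]$, and each is an independent set of $G$. Independence in $G$ means the supports are pairwise disjoint.

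\textbf{Choosing the class.} Because the classes partition $[m]$ and all weights are nonnegative, $\sum_{t=1}^{c}\sum_{i\in C_t}w_i=w(G)$. Some class therefore has weight at least $w(G)/c\ge w(G)/(k\Dmax)$, and we set $\mathcal S$ to be that class.

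\textbf{Running time.} $G$ has $m=\poly(n)$ vertices and can be built in polynomial time by comparing supports pairwise. The greedy colouring, the computation of each $w_i$ (a constant-size kernel dimension), and the selection of the heaviest class are also polynomial, so the algorithm is deterministic polynomial time.

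No step here is a real obstacle. The only care needed is to check that $\Delta(G)+1\le k\Dmax$, which follows because the term $h_i$ itself is excluded from its own neighbourhood count.
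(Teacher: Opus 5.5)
Your proposal is correct and follows essentially the same argument as the paper: you bound $\Delta(G)\le k(\Dmax-1)$, greedily colour $G$ with at most $k\Dmax$ colours, and keep the heaviest colour class. Your extra remarks (that $k\ge 1$ gives $k(\Dmax-1)+1\le k\Dmax$, that the weights are nonnegative, and that $G$ can be built in polynomial time) make explicit steps the paper leaves implicit.
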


\begin{proof}
Let $\Delta(G)$ denote the maximum degree of the term-intersection
graph $G$.
Consider a vertex $i$ of the term-intersection graph. The local term
$h_i$ acts on at most $k$ qubits, and each such qubit participates in
at most $\Dmax$ local terms. Therefore, $h_i$ intersects at most
\[
    k(\Dmax-1)
\]
other local terms. Hence
\[
    \Delta(G) \le k(\Dmax-1)
\]
and therefore
\[
    \Delta(G)+1 \le k\Dmax.
\]

A greedy coloring algorithm consequently produces a proper coloring of $G$ using at most $k\Dmax$ colors.

The color classes partition the vertex set.
Therefore, at least one color class has total weight at least
\[
    \frac{1}{k\Dmax} w(G).
\]
Every color class is an independent set. Taking a color class of maximum total weight yields the desired family $\mathcal S$.
\end{proof}

\paragraph{$L_p$-based truncation construction}
\begin{lemma}
\label{lem:disjoint-family-Lp}
    For every $p > 1$, define
    \begin{align}
        D_p = \left( \frac{1}{n} \sum_{q = 1}^n d_q^p  \right)^{1/p}.
    \end{align}
    Then there exists a deterministic polynomial-time algorithm that constructs a pairwise disjoint family $\calS\subseteq[m]$ satisfying
    \begin{align}
        |\calS| \ge n \frac{p-1}{k^2 p} \left( \frac{\mu^p}{D_p^p p k} \right)^{\frac{1}{p-1}}.
    \end{align}
\end{lemma}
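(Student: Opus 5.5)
We follow the truncation strategy sketched in the introduction. Fix a threshold $\Theta>0$, to be chosen at the end. Call a qubit \emph{heavy} if $d_q\ge\Theta$, and discard every local term that acts on at least one heavy qubit. In the remaining instance every qubit has degree less than $\Theta$. Applying the greedy colouring argument of \cref{lem:weighted-disjoint-family-max-degree} with unit weights, each surviving term meets fewer than $k\Theta$ other survivors. Greedy colouring therefore yields an independent set $\calS$ of size at least $m'/(k\Theta)$, where $m'$ is the number of surviving terms. Every step is deterministic and polynomial time.

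The key step is a lower bound on $m'$. Each discarded term touches a heavy qubit, so the number of discarded terms is at most $\sum_{q:\,d_q\ge\Theta} d_q$. Since $d_q\ge \Theta$ on this set, Markov's inequality gives $d_q\le d_q^p/\Theta^{p-1}$, hence
\[
\sum_{q:\,d_q\ge\Theta} d_q \le \frac{1}{\Theta^{p-1}}\sum_q d_q^p = \frac{n D_p^p}{\Theta^{p-1}}.
\]
Using $m=\mu n/k$, this yields
\[
m' \ge \frac{n\mu}{k}-\frac{nD_p^p}{\Theta^{p-1}},
\qquad
|\calS|\ge \frac{n}{k\Theta}\left(\frac{\mu}{k}-\frac{D_p^p}{\Theta^{p-1}}\right).
\]

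It remains to choose $\Theta$. The quantity $f(\Theta)=\frac{\mu}{k\Theta}-\frac{D_p^p}{\Theta^p}$ is maximised where $\frac{\mu}{k\Theta^2}=\frac{pD_p^p}{\Theta^{p+1}}$, that is, at
\[
\Theta^{p-1}=\frac{pkD_p^p}{\mu}.
\]
At this choice the second term equals $\frac{1}{p}$ times the first, so
\[
|\calS|\ge \frac{n}{k}\cdot\frac{\mu}{k\Theta}\cdot\frac{p-1}{p}.
\]
Substituting $\Theta=(pkD_p^p/\mu)^{1/(p-1)}$ gives
\[
\frac{\mu}{\Theta}=\mu\left(\frac{\mu}{pkD_p^p}\right)^{\frac{1}{p-1}}.
\]
Combining, we get
\[
|\calS|\ge n\frac{p-1}{k^2p}\,\mu\left(\frac{\mu}{pkD_p^p}\right)^{\frac{1}{p-1}}.
\]
The target bound is instead
\[
n\frac{p-1}{k^2p}\left(\frac{\mu^p}{D_p^ppk}\right)^{\frac{1}{p-1}}
= n\frac{p-1}{k^2p}\,\mu\left(\frac{\mu}{D_p^ppk}\right)^{\frac{1}{p-1}},
\]
because $\mu^{p/(p-1)}=\mu\cdot\mu^{1/(p-1)}$. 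The two expressions are identical, so this choice of $\Theta$ proves the lemma.

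The main obstacle is the careful bookkeeping of constants: the greedy bound $k\Theta$ versus $k(\Theta-1)+1$, and the fact that $m=\mu n/k$ is an exact identity. A second, smaller issue is that $\Theta$ may be non-integer. This causes no difficulty, since only the strict inequality $d_q<\Theta$ is used for surviving qubits. If $\Theta$ is so small that nothing survives, the right-hand side is non-positive and the statement is trivial.
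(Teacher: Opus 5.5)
Your proof is correct and is essentially the paper's own argument: discard all terms touching a qubit with $d_q\ge\Theta$, bound the number discarded by $nD_p^p/\Theta^{p-1}$, apply the greedy colouring of \cref{lem:weighted-disjoint-family-max-degree} with unit weights, and optimise at $\Theta^{p-1}=pkD_p^p/\mu$. The one detail worth writing out is the one you flag yourself: the colour count is bounded by $\Delta+1<k(\Theta-1)+1\le k\Theta$, which holds because $d_q<\Theta$ for every surviving qubit, and this is what justifies dividing $m'$ by $k\Theta$.
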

\begin{proof}
    For $\Theta \in \RR_{>0}$, define the set
    \begin{align}
        \Vhub(\Theta) = \{v \in [n] \mid d_v \ge \Theta\}.
    \end{align}
    Let $\mhub(\Theta)$ be the number of constraints $h_i$ acting on at least one qubit in $\Vhub(\Theta)$.
    We have
    \begin{align}
        \mhub(\Theta) \le \sum_{q \in \Vhub(\Theta)} d_q \le \sum_{q \in \Vhub(\Theta)} \frac{d_q^p}{\Theta^{p - 1}} \le \sum_{q = 1}^n  \frac{d_q^p}{\Theta^{p - 1}} = \frac{n D_p^p}{\Theta^{p - 1}}.
    \end{align}
    Now discard all constraints acting on at least one $q \in \Vhub(\Theta)$. By construction, in the resulting instance, every qubit $q$ has $d_q < \Theta$. In particular, in the resulting instance, every constraint intersects the support of at most $k\Theta$ other constraints, and thus the degree of the resulting term-intersection graph is at most $k \Theta$.
    Applying the maximum-degree construction (\cref{lem:weighted-disjoint-family-max-degree}) to the remaining term-intersection graph with unit vertex weights yields an independent set $\calS$ satisfying
    \begin{align}
        |\calS| \ge \frac{m - \mhub}{k\Theta} \ge \frac{1}{k \Theta} \left(\frac{n \mu}{k} - \frac{n D_p^p}{\Theta^{p-1}}  \right)= \frac{n}{k^2} \left( \frac{\mu}{\Theta}  - \frac{k D_p^p}{\Theta^p}\right).
    \end{align}
    Taking the derivative with respect to $\Theta$ and setting it to $0$ shows that
    \[
        \frac{n}{k^2}
        \left(
            \frac{\mu}{\Theta}
            -
            \frac{kD_p^p}{\Theta^p}
        \right)
    \]
    is maximized at
    \[
        \Theta_{\mathrm{opt}}
        =
        \left(
            \frac{pkD_p^p}{\mu}
        \right)^{\frac{1}{p-1}},
    \]
    which can be computed in polynomial time. Substituting $\Theta_{\text{opt}}$ we get
    \begin{align}
        |\calS| &\ge \frac{n}{k^2}\frac{\mu}{\Theta_{\text{opt}}}\left(1 - \frac{1}{p}\right) \\
        &\ge n \frac{p-1}{k^2 p} \left( \frac{\mu^p}{D_p^ppk}\right)^{\frac{1}{p-1}}.
    \end{align}
\end{proof}

\subsection{Main result}
\label{subsec:main result}

The notation $O^*(\cdot)$ suppresses factors polynomial in $n$ and $m$.
\begin{theorem}
\label{theorem:main result for upper bound}
    Let $H=\sum_{i=1}^{m}h_i$ be an instance of the frustration-free $k$-local Hamiltonian problem on $n$ qubits, where $k=O(1)$. For each qubit $q$, let $d_q = |\{i \mid q \in \operatorname{supp}(h_i)\}|$ be the number of constraints acting on $q$. Let $\mu = \frac{km}{n}$ upper bound the average number of constraints acting on a qubit and define the following qubit degree measures:
    \begin{align}
        \Dmax &= \max_q d_q \\
        D_p &= \left(\frac{1}{n} \sum_{q = 1}^n d_q^p \right)^{1/p},
    \end{align}
    where $p>1$. Consider the ratios
    \begin{align}
        \rhomax = \frac{\mu}{\Dmax}, \quad \rho_p = \frac{\mu}{D_p}.
    \end{align}
    Suppose $\rhomax \ge \varepsilon$ or that for some constant $p>1$, $\rho_p \ge \varepsilon$, where $\varepsilon > 0$ is a constant. Then there is a constant $\varepsilon'$ and a randomized classical algorithm solving the instance $H$ in time
    \begin{align}
        O^*\left( 2^{(1 - \varepsilon')n} \right).
    \end{align}
    Moreover, there then is a quantum algorithm solving the instance in time
    \begin{align}
        O^*\left( 2^{(1 - \varepsilon')\frac{n}{2}} \right).
    \end{align}
\end{theorem}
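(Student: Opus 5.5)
The plan is to combine the Compression Theorem (\cref{thm:compression}) with one of the two disjoint-family constructions, and then run \cref{fact:exponential-time algo} on $H_{\mathrm{eff}}$. The key quantity is the total weight $W:=\sum_{s\in\mathcal S}w_s$ of the selected family. The compressed dimension is $M=2^{n-W}$, so it suffices to show $W\ge \varepsilon' n$ for a constant $\varepsilon'>0$ depending only on $k$, $\varepsilon$ (and $p$).

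\emph{Case $\rhomax\ge\varepsilon$.} Apply \cref{lem:weighted-disjoint-family-max-degree}. Since $w_i\ge\beta_k$ for every $i$, we have $w(G)\ge\beta_k m=\beta_k n\mu/k$. Hence
\[
W\ge \frac{w(G)}{k\Dmax}\ge \frac{\beta_k n\mu}{k^2\Dmax}=\frac{\beta_k\rhomax}{k^2}\,n\ge \frac{\beta_k\varepsilon}{k^2}\,n .
\]

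\emph{Case $\rho_p\ge\varepsilon$.} Apply \cref{lem:disjoint-family-Lp} and use $w_s\ge\beta_k$ again. This gives
\[
W\ge \beta_k|\mathcal S|\ge \beta_k n\,\frac{p-1}{k^2p}\Bigl(\frac{\rho_p^{p}}{pk}\Bigr)^{1/(p-1)}\ge \beta_k n\,\frac{p-1}{k^2p}\Bigl(\frac{\varepsilon^{p}}{pk}\Bigr)^{1/(p-1)} ,
\]
which is a constant times $n$. In either case we may also cap $\varepsilon'<1$, since trivially $W\le n$.

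\emph{Algorithm.} Compute $\mathcal S$ and then $V$ and $H_{\mathrm{eff}}$ in polynomial time. On YES instances $\lambda(H_{\mathrm{eff}})=0$. On NO instances $\lambda(H_{\mathrm{eff}})\ge\lambda(H)\ge b$. So estimating $\lambda(H_{\mathrm{eff}})$ to additive error $b/3$ decides the instance, and by \cref{fact:exponential-time algo} this takes time $O^*(M)=O^*(2^{(1-\varepsilon')n})$ classically and $O^*(2^{(1-\varepsilon')n/2})$ quantumly.

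The step I expect to need real care is verifying the hypotheses of \cref{fact:exponential-time algo}: that $H_{\mathrm{eff}}$ is $\poly(n)$-sparse, has $\|H_{\mathrm{eff}}\|\le\poly(n)$, and admits sparse access in a $D=M$ dimensional space. The norm bound is immediate, since $\|V^\dagger HV\|\le\|H\|\le\sum_i\|h_i\|$. For sparsity, I would label basis states of $\mathbb C^M$ by tuples $(b_s)_{s\in\mathcal S}$ together with a bit string on the remaining qubits. Each term $V^\dagger h_\eta V$ acts nontrivially only on the (at most $k$) blocks $s$ it meets and on at most $k$ free qubits. Its local dimension is therefore at most $2^{k^2+k}=O(1)$, so each row has $O(1)$ nonzeros per term and $O(m)$ nonzeros in total. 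Entries are computable in $\poly(n)$ time from the constant-size local matrices produced in \cref{thm:compression}, and the indexing of $\mathbb C^M$ by mixed-radix tuples is efficiently computable. This yields the standard sparse-access oracles, and the classical power/Lanczos method and quantum phase-estimation-based routines then apply on the $M$-dimensional space.
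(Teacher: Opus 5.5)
Your proposal is correct and follows the paper's proof essentially step for step. You lower-bound the selected weight by $\beta_k n \rho_{\max}/k^2$ via \cref{lem:weighted-disjoint-family-max-degree}, or by $\beta_k$ times the bound of \cref{lem:disjoint-family-Lp}. You then compress via \cref{thm:compression}, use $\lambda(H_{\mathrm{eff}})\ge\lambda(H)$ to keep the promise gap, and finish with \cref{fact:exponential-time algo}. Your explicit check of the norm bound and of sparse access for $H_{\mathrm{eff}}$ is just a more detailed version of the paper's remark that each projected term acts on at most $k$ constant-dimensional tensor factors.
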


\begin{proof}
Suppose $\rho_p \ge \varepsilon$. Then, by
\cref{lem:disjoint-family-Lp}, we can construct an independent set
$\calS$ with
\begin{align}
    |\calS|
    &\ge
    n \frac{p-1}{k^2 p}
    \left(
        \frac{\rho_p^p}{pk}
    \right)^{\frac{1}{p-1}} \\
    &\ge
    n \frac{p-1}{k^2 p}
    \left(
        \frac{\varepsilon^p}{pk}
    \right)^{\frac{1}{p-1}}.
\end{align}
Since $w_s \ge \beta_k$ for every $s\in\calS$, we have
\begin{align}
    \sum_{s\in\calS} w_s
    &\ge \beta_k |\calS| \\
    &\ge
    \beta_k
    \frac{p-1}{k^2 p}
    \left(
        \frac{\varepsilon^p}{pk}
    \right)^{\frac{1}{p-1}} n.
\end{align}
Setting
\[
    \varepsilon'
    :=
    \beta_k
    \frac{p-1}{k^2 p}
    \left(
        \frac{\varepsilon^p}{pk}
    \right)^{\frac{1}{p-1}},
\]
we obtain
\[
    \sum_{s\in\calS} w_s \ge \varepsilon' n.
\]

If instead $\rhomax \ge \varepsilon$, we can use
\cref{lem:weighted-disjoint-family-max-degree} to construct a weighted
independent set $\calS$ with
\begin{align}
    \sum_{s\in\calS} w_s
    &\ge \frac{1}{k\Dmax} w(G) \\
    &\ge \frac{\beta_k}{k\Dmax} m \\
    &= \frac{\beta_k\rhomax}{k^2}n \\
    &\ge \frac{\beta_k\varepsilon}{k^2}n,
\end{align}

Setting $\varepsilon' = \frac{\beta_k \varepsilon}{k^2}$ we see that
\[
\sum_{s\in\calS} w_s \ge \varepsilon' n.
\]

In either case, applying \cref{thm:compression} gives an effective Hamiltonian
\[
    H_{\mathrm{eff}} = V^\dagger H V
\]
acting on a Hilbert space of dimension
\[
    M=2^{\,n-\sum_{s\in\mathcal S}w_s} \le 2^{(1 - \varepsilon')n}.
\]

The effective Hilbert space has the tensor-product form
\[
    \left(\bigotimes_{s\in\mathcal S}\mathbb C^{g_s}\right) \otimes (\mathbb C^2)^{\otimes\left(n-\sum_{s\in\mathcal S}k_s\right)},
\]
where $g_i:=\dim\ker(h_i)$.
Each tensor factor has constant dimension because all $h_i$ are $k$-local and thus
\[
    g_s \le 2^k =O(1).
\]
Moreover, each projected local term
\[
    V^\dagger h_\eta V
\]
acts non-trivially on at most $k=O(1)$ tensor factors. Thus $H_{\mathrm{eff}}$ satisfies the sparsity assumptions of \cref{fact:exponential-time algo}.
Therefore, the smallest eigenvalue of $H_{\mathrm{eff}}$ can be estimated to inverse-polynomial additive error classically in time
\[
    O^*(M)
\]
and quantumly in time
\[
    O^*(\sqrt{M}),
\]
with high probability.

By \cref{thm:compression}, if
\[
    \lambda(H)=0,
\]
then
\[
    \lambda(H_{\mathrm{eff}})=0.
\]
Moreover,
\[
    \lambda(H_{\mathrm{eff}}) \ge \lambda(H).
\]
Therefore, in the NO case, the ground-state energy of $H_{\mathrm{eff}}$ remains separated from zero by the original inverse-polynomial promise gap. Hence an inverse-polynomial additive approximation to $\lambda(H_{\mathrm{eff}})$ solves the original instance.

Finally, substituting the upper bound on $M$ gives the claimed classical and quantum running times.
\end{proof}

\subsection{Interesting parameter regimes}
We now discuss implications of our main result for concrete examples.
\paragraph{Almost regular instances}
One regime where \cref{theorem:main result for upper bound} gives an improvement over \cref{fact:exponential-time algo} is when the interaction graph is almost regular. Specifically, we have:
\begin{corollary}
    Consider quantum $k$-SAT. Let $n$ be the number of qubits and $m$ the number of constraints. Suppose we are promised that there exists some constant $\alpha\ge 1$ such that every qubit is acted upon by at most $\alpha \mu$ constraints, where $\mu = \frac{km}{n}$ upper bounds the average number of constraints per qubit.
    With this additional promise, quantum $k$-SAT can be solved in randomized classical time
    \begin{align}
        O^*\left( 2^{(1 - \varepsilon')n} \right)
    \end{align}
    and in quantum time
    \begin{align}
        O^*\left( 2^{(1 - \varepsilon')\frac{n}{2}} \right)
    \end{align}
    for some constant $\varepsilon' > 0$.
\end{corollary}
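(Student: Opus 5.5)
The corollary follows directly from \cref{theorem:main result for upper bound}, applied through its maximum-degree branch. Every quantum $k$-SAT Hamiltonian is a frustration-free $k$-local Hamiltonian problem instance with positive-semidefinite terms. Its terms are projectors, so they have norm at most $1$, and the thresholds are $a=0$ and $b\ge 1/\poly(n)$. Hence the theorem applies verbatim. The promise that every qubit is acted upon by at most $\alpha\mu$ constraints gives $\Dmax\le\alpha\mu$, and therefore $\rhomax=\mu/\Dmax\ge 1/\alpha$. Since $\alpha\ge1$ is a constant, we may take $\varepsilon:=1/\alpha$ in the theorem.

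Tracing the theorem's proof makes the constant explicit. First apply the preprocessing of this section: remove zero projectors and idle qubits, and reject immediately if some term has trivial kernel. Then \cref{lem:weighted-disjoint-family-max-degree} yields a disjoint family of weight
\[
\sum_{s\in\calS}w_s\ge\frac{\beta_k}{k\Dmax}\,m\ge\frac{\beta_k}{k^2\alpha}\,n .
\]
Applying \cref{thm:compression} to this family and then \cref{fact:exponential-time algo} to $H_{\mathrm{eff}}$ gives the stated running times with $\varepsilon'=\beta_k/(k^2\alpha)$, where $\beta_k=-\log_2(1-2^{-k})$. This constant depends only on $k$ and $\alpha$.

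The one point needing care is the preprocessing, which changes $n$ and $m$. Discarding zero projectors only lowers each $d_q$, so $\Dmax$ does not increase; but it also lowers $m$, and hence $\mu$. To avoid comparing the promise before and after, I would note that zero projectors contribute nothing to $H$ and are never selected for $\calS$. The bound then follows from counting the nonzero terms $m'$ directly: each lies in some color class, and there are at most $k\Dmax$ classes.

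This needs $m'$ to be linear in $n$. If nonzero terms cover every qubit, then $km'\ge n$, so $m'\ge n/k$. I would handle idle qubits (those acted on only trivially) as follows: they contribute a free tensor factor and can simply be traced out, which reduces the dimension further. With this, the remaining step is routine bookkeeping, and the argument is a direct instantiation of the theorem.
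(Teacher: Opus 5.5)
Your main argument is correct and is essentially the paper's: both proofs are a one-line instantiation of \cref{theorem:main result for upper bound}. The only difference is which hypothesis is verified. The paper bounds $D_p\le\alpha\mu$ and uses the $\rho_p$ branch. You read the promise directly as $\Dmax\le\alpha\mu$ and use the $\rhomax$ branch. Yours is the more natural choice, since the promise is literally a max-degree bound. It also gives the clean constant $\varepsilon'=\beta_k/(k^2\alpha)$. The paper's constant, $\beta_k\frac{p-1}{k^2p}\bigl(\frac{1}{\alpha^p pk}\bigr)^{1/(p-1)}$, is strictly smaller for every fixed $p$; for example, it is $\beta_k/(4k^3\alpha^2)$ at $p=2$.

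Your last two paragraphs should be dropped, because the reasoning there is wrong. After discarding zero projectors, the coloring gives $\sum_{s\in\calS}w_s\ge\beta_k m'/(k\Dmax)$. The promise, however, bounds $\Dmax$ only by $\alpha km/n$ with the original $m$. So what you need is $m'=\Omega(m)$, not $m'=\Omega(n)$. From $m'\ge n/k$ alone you get only $\beta_k n^2/(\alpha k^3 m)$, which is $o(n)$ once $m$ is superlinear.

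No argument can handle zero terms here. Pad any instance with polynomially many $1$-local zero projectors, placed on low-degree qubits to equalize degrees in case they are counted in $d_q$. This yields $\mu\ge\frac1n\sum_q d_q=\Dmax$, so the promise holds with $\alpha=1$. The corollary would then beat brute force on the hard instances of \cref{cor:hardnesssmallhigh}, contradicting SETH. The statement must therefore be read under the section's standing assumption that there are no zero terms, so that $m'=m$; you should simply invoke that convention. Your treatment of idle qubits is fine: removing them only increases $\mu$ and shrinks the dimension.
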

\begin{proof}
    By the promise, we have
    \begin{align}
        D_p = \left(\frac{1}{n} \sum_{q = 1}^n d_q^p\right)^{1/p} \le \left(\frac{1}{n} \sum_{q = 1}^n (\alpha \mu)^p\right)^{1/p} \le \alpha \mu.
    \end{align}
    Therefore, $\rho_p \geq \frac{1}{\alpha}$. Applying \cref{theorem:main result for upper bound} with $\varepsilon = \frac{1}{\alpha}$ now gives the claimed result. 
\end{proof}
Note that, in particular, this almost regular case includes the case where each qubit is acted upon at most $O(1)$ times.

\paragraph{Small number of high degree qubits}
Our algorithm also works in the more general case where the qubits can be divided into subsets, where for each, the degree of the qubits is at most constant, or the size of subset multiplied by the maximal degree of the qubits in the set is $O(n^{1 - \varepsilon})$ for some $\varepsilon > 0$. 
\begin{corollary}
    \label{cor:smallnumberhighdegree}
    Consider quantum $k$-SAT. Let $n$ be the number of qubits, $m$ the number of constraints and set $\mu = \frac{km}{n}$. Suppose we are promised that there exists a partition of $[n]$ into $r = O(1)$ disjoint subsets $[n] = L \cup \bigcup_{i = 1}^r S_i$, where for every $i$ there is a size bound $s_i \ge 0$ and a degree bound $\delta_i \ge 0$ such that $|S_i| \le O(n^{s_i})$ and
    \begin{align}
        q \in L &\implies d_q \le O(\mu), \\
        q \in S_i &\implies d_q \le O(n^{\delta_i} \mu).
    \end{align}
    If for all $i$, $s_i + \delta_i < 1$, then this instance of quantum $k$-SAT can be solved in randomized classical time
    \begin{align}
        O^*\left( 2^{(1 - \varepsilon')n} \right)
    \end{align}
    and in quantum time
    \begin{align}
        O^*\left( 2^{(1 - \varepsilon')\frac{n}{2}} \right)
    \end{align}
    for some constant $\varepsilon' > 0$.
\end{corollary}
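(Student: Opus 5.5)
The plan is to reduce to \cref{theorem:main result for upper bound} by bounding $D_p$ by a constant multiple of $\mu$ for a suitable fixed $p>1$. Equivalently, we show $\rho_p=\mu/D_p\ge\varepsilon$ for a constant $\varepsilon>0$.

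\textbf{Choice of $p$.} Since $r=O(1)$ and $s_i+\delta_i<1$ for every $i$, we can pick a constant $p>1$ such that $s_i+p\delta_i\le 1$ for all $i$. Concretely, when $\delta_i>0$, choose $p\le (1-s_i)/\delta_i$, which exceeds $1$. If $\delta_i=0$ the condition holds for every $p$ because $s_i\le 1$. Take $p$ to be the minimum of these finitely many admissible values, capped at some constant such as $2$.

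\textbf{Bounding $D_p^p$.} Split the sum over the partition:
\begin{align*}
D_p^p=\frac1n\sum_{q\in L}d_q^p+\sum_{i=1}^r\frac1n\sum_{q\in S_i}d_q^p
\le O(\mu^p)+\sum_{i=1}^r\frac{O(n^{s_i})\,O(n^{p\delta_i}\mu^p)}{n}.
\end{align*}
By the choice of $p$, each term in the second sum is $O(\mu^p)$. Since there are $r=O(1)$ such terms, $D_p^p\le C^p\mu^p$ for a constant $C$. Hence $\rho_p\ge 1/C$.

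\textbf{Conclusion.} Applying \cref{theorem:main result for upper bound} with $\varepsilon=1/C$ and this constant $p$ gives the stated classical and quantum running times. The constant $\varepsilon'$ depends only on $k$, $p$ and $C$.

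The only delicate step is choosing $p$ uniformly across the $r$ parts so that every exceptional block contributes $O(\mu^p)$. The hidden constants in the $O(\cdot)$ promises also need to stay uniform in $n$, which they do because the constants are fixed.

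It is worth checking whether $\Dmax$ could be used instead. It cannot, since $\Dmax$ may be as large as $n^{\max_i\delta_i}\mu$. The $L_p$ route is therefore necessary, and the strict inequality $s_i+\delta_i<1$ is exactly what allows some $p>1$ to exist.
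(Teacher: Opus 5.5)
Your proposal is correct and follows the paper's proof. Both arguments choose a constant $p>1$ with $s_i+p\delta_i\le 1$ for every $i$; the paper's choice $p=1+x$ with $0<x\le\min_i(1-s_i-\delta_i)$ is one admissible instance of your condition. Both then split $D_p$ over the partition to bound $\rho_p$ below by a constant and invoke \cref{theorem:main result for upper bound}. The only cosmetic difference is that you bound $D_p^p$ directly, whereas the paper takes $p$-th roots termwise before summing.
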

\begin{proof}
    We have
    \begin{align}
        D_p = \left[\frac{1}{n} \left(\sum_{q \in L} d_q^p + \sum_{i = 1}^r \sum_{q \in S_i} d_q^p \right) \right]^{1/p}.
    \end{align}
    Let $\alpha \ge 1$ be such that $d_q \le \alpha \mu$ when $q \in L$, $d_q \le \alpha n^{\delta_i} \mu$ when $q \in S_i$ and $|S_i| \le \alpha n^{s_i}$. The promise guarantees the existence of such an $\alpha$.
    Substituting, we get
    \begin{align}
        D_p &\le \left(\frac{|L|}{n} \alpha^p \mu^p + \sum_{i = 1}^r \frac{\alpha n^{s_i}}{n} \alpha^p n^{\delta_i p} \mu^p \right)^{1/p} \\
        &\le \alpha \mu + \sum_{i = 1}^r \alpha^{\frac{p+1}{p}}n^{\frac{s_i - 1}{p} + \delta_i} \mu.  
    \end{align}
    Set $p = 1 + x$, then
    \begin{align}
        \frac{s_i - 1}{ p} + \delta_i = \frac{s_i -1 + p\delta_i}{p} = \frac{\delta_i + s_i - 1 + \delta_i x}{p}.
    \end{align}
    By setting $0< x \le \min_i 1 - \delta_i - s_i$ (possible by assumption) we get 
    \begin{align}
        \frac{s_i - 1}{ p} + \delta_i \le 0
    \end{align}
    for all $i$, which implies
\[
    D_p
    \le
    \alpha\mu
    +
    r\alpha^2\mu
    \le
    (r+1)\alpha^2\mu.
\]
Hence,
\[
    \rho_p
    =
    \frac{\mu}{D_p}
    \ge
    \frac{1}{(r+1)\alpha^2}.
\]
Applying \cref{theorem:main result for upper bound} with $\varepsilon=\frac{1}{(r+1)\alpha^2}$ now completes the proof.
\end{proof}
\section{(Q)SETH hardness of quantum 5-SAT}
In this section, we show that in contrast to \cref{theorem:main result for upper bound}, assuming the Strong Exponential Time Hypothesis, general $5$-local quantum SAT instances on $n$ qubits cannot be solved in time $O(2^{(1 - \varepsilon) n})$ for any $\varepsilon > 0$. This result is largely based on work by Chia, Hasegawa, Le Gall and Shen \cite{ChiaHLGY26} who prove a similar statement for the 3-local Hamiltonian problem.

The central idea behind their result is the \emph{size-preserving circuit-to-Hamiltonian construction}. This allows embedding a circuit verifying $k$-SAT on $n$ variables (obtained from \cref{lem:U_compute_Phi} or \cref{thm:two-unary-sat-verifier}) into a Hamiltonian on $n + o(n)$ qubits as opposed to the $O(n + T)$ qubits achieved by standard circuit-to-Hamiltonian constructions ($T$ is the runtime of the circuit). An algorithm solving local Hamiltonian on $n$ qubits in time $2^{(1 - \varepsilon)n}$ for $\varepsilon > 0$ thus gives rise to an algorithm solving $k$-SAT on $n$ variables in time $2^{(1 - \varepsilon')n}$ for some $\varepsilon' > 0$, which contradicts SETH.

The main technical difficulty is that the size-preserving circuit-to-Hamiltonian construction from \cite{ChiaHLGY26} does not produce valid QSAT instances. In this section we describe how to modify their construction to produce QSAT instances. This modification does require a small locality increase: we obtain SETH hardness for quantum $5$-SAT as opposed to the SETH hardness for $3$-LH obtained by \cite{ChiaHLGY26}.

For their construction, Chia et al. use a two-part clock, representing time in ``$\sqrt{T}$-ary''. That is, they represent the time $t = (k-1)\sqrt{T} + \ell$ as $\ket{c_k}\ket{c_\ell}$ where both parts live on $O(\sqrt{T})$ qubits. For the first clock state $\ket{c_k}$, that is, for the ``leading digit'', they use a ``pulse clock'': a clock with a single 1 whose position represents the time step. This allows them to ``read out'' the time of the clock from a single qubit. For the second clock state they use a unary clock. Here, the operator $\ket{1}\bra{0}_{\ell+1}$ allows incrementing the time with a term of locality $1$.

Two immediate obstacles present themselves when converting this approach to QSAT. Firstly, the Hamiltonian to enforce the pulse clock given in \cite{ChiaHLGY26} is not frustration-free. Secondly, the operator $\ket{1}\bra{0}_{\ell+1}$ for incrementing the unary clock is not a projector and hence not a legal QSAT constraint. 

To resolve these issues we use two different clocks. For the ``leading digit'' $k$ we use a pulse clock, but couple this to a unary clock so the legal clock states can be enforced by local projectors. For the ``less significant digit'' $\ell$ we use a clock due to Bravyi \cite{bravyi06} that allows for 2-local incrementing using projectors. We now describe the clocks in more detail.

\subsection{The $K$ clock}
The main property that we want the $K$ clock to have is that it can be read out 1-locally. That is, we want to construct the $K$ clock such that its state can be uniquely determined by a 1-local projector. For this we want to use the pulse clock $00\dots010\dots 0$ just like \cite{ChiaHLGY26}. However, enforcing that there is at least one $1$ does not seem possible with local projectors (Chia et al. use frustrated terms). To get around this we couple to a unary clock. That is, we consider two registers: $P,U$, both of $S_K$ qubits, where $S_K$ is the maximum time in the clock. The first contains the pulse clock part and the second the unary part. The intended clock states will be:
\begin{align}
    \ket{\CK_1} &= \ket{10\dots 0}_P \ket{10\dots}_U \\
    \ket{\CK_2} &= \ket{010\dots 0}_P \ket{110\dots}_U \\
    \ket{\CK_3} &= \ket{0010\dots 0}_P \ket{1110\dots}_U \\
    &\vdots \notag \\
    \ket{\CK_s} &= \ket{0^{s-1}10^{S_K-s}}_P \ket{1^s0^{S_K-s}}_U \\
    &\vdots \notag \\
    \ket{\CK_{S_K}} &= \ket{0^{S_K-1}1}_P \ket{1^{S_K}}_U.
\end{align}
This clock can indeed be enforced with $3$-local projectors.
To enforce the unary structure of the $U$ register, we use the projector
\begin{align}
    \label{eq:Hunary}
    \Hunary = \sum_{i = 1}^{S_K-1} \ketbrab{01}_{U_i, U_{i+1}}.
\end{align}
We will also demand that the first qubit of $U$ is a $1$ through the initialization projector:
\begin{align}
    \label{eq:Hinit}
    \Hinit = \ketbrab{0}_{U_1}.
\end{align}
Next, we enforce that there is at most a single $1$ in the $P$ register:
\begin{align}
    \label{eq:Hlesstwo}
    \Hlesstwo = \sum_{1\le i < j \le S_K} \ketbrab{11}_{P_i P_j}.
\end{align}
Note that there could be no 1s whatsoever, which we want to avoid. This is why we couple with a unary clock through the following coupling projector:
\begin{align}
    \label{eq:Hsync}
     \begin{split}
        \Hsync =&\sum_{i = 1}^{S_K-1} \left(\ketbrab{100} + \ketbrab{010} + \ketbrab{111} \right)_{P_i U_i U_{i+1}} \\
        &+ \ketbrab{01}_{P_{S_K}U_{S_K}} + \ketbrab{10}_{P_{S_K}U_{S_K}} 
    \end{split}
\end{align}
\begin{claim}
    The zero-energy space of the full pulse + unary clock Hamiltonian $\HPU = \Hunary + \Hinit + \Hlesstwo + \Hsync$ is spanned by the states $\ket{\CK_i}$.
\end{claim}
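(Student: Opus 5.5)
All four terms are sums of projectors that are diagonal in the computational basis, so $\HPU$ is positive semidefinite and diagonal in that basis. Its zero-energy space is therefore spanned by the computational basis strings $\ket{p}_P\ket{u}_U$, with $p,u\in\{0,1\}^{S_K}$, that are annihilated by every individual projector. The claim thus reduces to a combinatorial statement: the bit strings $(p,u)$ violating none of the listed local patterns are exactly $(0^{s-1}10^{S_K-s},\,1^s0^{S_K-s})$ for $s\in[S_K]$. I will first record this reduction: for diagonal positive semidefinite terms, $\bra{\psi}\HPU\ket{\psi}=0$ forces every basis string in the support of $\ket{\psi}$ to have zero energy. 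Then I prove the two inclusions.

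\textbf{Step 1 (the $U$ register).} $\Hunary$ forbids the pattern $01$ on adjacent positions $U_i,U_{i+1}$, so $u$ is of the form $1^s0^{S_K-s}$ for some $s\in\{0,\dots,S_K\}$. $\Hinit$ forbids $u_1=0$, which excludes $s=0$. Hence $s\ge 1$.

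\textbf{Step 2 (coupling $P$ to $U$).} I will read off the constraints of $\Hsync$ position by position, for a fixed $i\le S_K-1$.
\begin{itemize}
\item If $u_iu_{i+1}=00$, the forbidden pattern $100$ forces $p_i=0$.
\item If $u_iu_{i+1}=11$, the forbidden pattern $111$ forces $p_i=0$.
\item If $u_iu_{i+1}=10$, the forbidden pattern $010$ forces $p_i=1$.
\end{itemize}
At $i=S_K$, the two-qubit terms force $p_{S_K}=u_{S_K}$.

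For $u=1^s0^{S_K-s}$ with $s<S_K$, these rules give $p_s=1$ and $p_i=0$ for every $i\neq s$; for $i=S_K$ this uses $u_{S_K}=0$. For $s=S_K$, every pair $(u_i,u_{i+1})$ equals $11$, so $p_i=0$ for $i<S_K$, and $p_{S_K}=u_{S_K}=1$. In both cases $p=0^{s-1}10^{S_K-s}$. Consequently $\Hsync$ alone already pins $p$ down uniquely, and $\Hlesstwo$ is not needed for this inclusion.

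\textbf{Step 3 (converse).} I will check directly that each $\ket{\CK_s}$ is annihilated by every term.
\begin{itemize}
\item Its $U$ part contains no $01$ pattern and has $u_1=1$, so $\Hunary$ and $\Hinit$ vanish on it.
\item Its $P$ part has exactly one $1$, so $\Hlesstwo$ vanishes on it.
\item The case analysis of Step 2, read in reverse, shows that no forbidden pattern of $\Hsync$ occurs.
\end{itemize}
The $\ket{\CK_s}$ are distinct basis states, so they are orthonormal and span the kernel.

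The only delicate point is the bookkeeping in Step 2, namely the boundary position $S_K$ and the case $s=S_K$, where the pair pattern $10$ never appears and the pulse must come from the last two-qubit terms. Everything else is a routine check.
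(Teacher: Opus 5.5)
Your proof is correct and follows the same route as the paper. Both arguments use $\Hunary$ and $\Hinit$ to force $u=1^s0^{S_K-s}$ with $s\ge 1$. Both then read off from $\Hsync$ that the only legal triples on $P_iU_iU_{i+1}$ are $000$, $011$, $110$, with $p_{S_K}=u_{S_K}$ at the boundary. Your explicit reduction to computational-basis strings (all terms are diagonal) and your observation that $\Hlesstwo$ is redundant once $\Hsync$ is imposed are both accurate; the paper leaves the first implicit and does not mention the second.
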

\begin{proof}[Proof of claim]
    Note that $\Hunary$ ensures that there is no $0$ followed by a $1$ in the $U$ register; in other words, it enforces the unary structure. $\Hinit$ then excludes states where the $U$ register is all $0$s. On the $P$ register, states with $2$ or more $1$'s are excluded by $\Hlesstwo$.
    
    Finally, consider $\Hsync$. Note that this prohibits qubits $P_i, U_i, U_{i+1}$ to be in the states $\ket{100}, \ket{010}$ and $\ket{111}$. The states $\ket{101}$ and $\ket{001}$ are already excluded by $\Hunary$. It follows that the only legal assignments are $\ket{000}$, $\ket{011}$ and $\ket{110}$. In other words, it is required that there is a $1$ in the $i$-th qubit of the $P$ register, where $i$ is the end of the string of $1$s in the unary part. Note that the additional terms on $P_{S_K}, U_{S_K}$ similarly enforce this for the final time step. In particular, (together with $\Hinit$) $\Hsync$ enforces that there is always a $1$ in the $P$ register.  
\end{proof}

The time of the $K$ clock can now be read out using a $1$-local projector. That is, for $\PiPU$ the projector on the zero-energy space of $\HPU$ we have
\begin{align}
    \label{eq:clockreadK}
    \PiPU \ketbrab{1}_{P_k} \PiPU = \ketbrab{\CK_k}.
\end{align}
Incrementing the $K$ clock is made more difficult by our coupling to the unary register, but can still be done with the following $3$-local projector. 
\begin{align}
    \PiKinc_{k,k+1} = \frac{1}{2}\big(\ket{100} - \ket{011} \big)\big(\bra{100} - \bra{011}\big)_{P_k P_{k+1} U_{k+1}}.
\end{align}
We have
\begin{align}
\label{eq:clockincK}
\begin{split}
    \PiPU \PiKinc_{k,k+1} \PiPU =& \frac{1}{2} \big(\ketbrab{\CK_k} + \ketbrab{\CK_{k+1}} \\ &- \ket{\CK_k}\bra{\CK_{k+1}} - \ket{\CK_{k+1}}\bra{\CK_{k}}\big).
\end{split}
\end{align}

\subsection{The $L$ clock}
For the $L$ clock the main factor of importance is the locality of the incrementing operation. 
We will use a clock due to Bravyi \cite{bravyi06} that allows $2$-local incrementing.
The clock register is defined as
\begin{align}
    \mathcal H_{\mathrm{clock}} = \left(\mathbb C^4\right)^{\otimes S_L}.
\end{align}
Here $S_L$ is the number of computational gate applications we want to encode. There will be more time-steps ($2S_L$ in fact), but not all transitions will have enough ``locality spare'' to apply a computational gate.

Each clock particle has the orthonormal basis
\begin{align*}
    \{\ket{u},\ket{a_1},\ket{a_2},\ket{d}\}
\end{align*}
where $u$, $a_1$, $a_2$, and $d$ stand for
\emph{unborn}, \emph{active phase 1}, \emph{active phase 2}, and \emph{dead}, respectively.
Since each clock particle is four-dimensional, it can be represented by two qubits. We use the following encoding
\begin{align}
    \ket{u} &= \ket{00} \\
    \ket{a_1} &= \ket{10} \\
    \ket{a_2} &= \ket{11} \\
    \ket{d} &= \ket{01}.
\end{align}
Valid clock states are supposed to be of the form
\begin{align}
    \ket{\CL_1} &= \ket{a_1}\ket{u^{S_L - 1}} \\
    \ket{\DL_1} &= \ket{a_2}\ket{u^{S_L - 1}} \\
    \ket{\CL_2} &= \ket{d}\ket{a_1}\ket{u^{S_L - 2}} \\
    &\;\vdots \\
    \ket{\CL_\ell} &= \ket{d}^{\otimes \ell - 1} \ket{a_1} \ket{u}^{\otimes S_L - \ell} \\
    \ket{\DL_\ell} &= \ket{d}^{\otimes \ell - 1} \ket{a_2} \ket{u}^{\otimes S_L - \ell} 
\end{align}

To enforce the clock states are in legal time steps, Bravyi uses the following sum of projectors:
\begin{align}
\label{eq:HBravyi}
\begin{split}
    \Hbrav_L =& \ketbrab{u}_1 + \ketbrab{d}_{S_L} \\
    &\;+ \sum_{1 \le i < j \le S_L} \big(\ketbrab{a_1} + \ketbrab{a_2}\big)_{2i - 1, 2i} \otimes \big(\ketbrab{a_1} + \ketbrab{a_2}\big)_{2j - 1, 2j} \\
    &\;+ \sum_{1 \le i < j \le S_L} \big(\ketbrab{a_1} + \ketbrab{a_2} + \ketbrab{u}\big)_{2i - 1, 2i} \otimes \ketbrab{d}_{2j-1, 2j} \\
    &\;+ \sum_{1 \le i < j \le S_L} \ketbrab{u}_{2i - 1, 2i} \otimes \big(\ketbrab{a_1} + \ketbrab{a_2} + \ketbrab{d}\big)_{2j - 1, 2j} \\
    &\;+ \sum_{1 \le i \le S_L - 1} \ketbrab{d}_{2_i - 1, 2i} \otimes \ketbrab{u}_{2i+1, 2i+2}.
\end{split}
\end{align}
The terms ensure, in order, that
\begin{itemize}
    \item The first qubit pair is never $\ket{u}$ (the state cannot be $\ket{u^{S_L}}$).
    \item The last qubit pair is never $\ket{d}$ (the state cannot be $\ket{d^{S_L}}$).
    \item There is at most one active state.
    \item Only $\ket{d}$ can appear before $\ket{d}$ (the $d$'s form a prefix).
    \item Only $\ket{u}$ can appear after $\ket{u}$ (the $u$'s form a suffix).
    \item $\ket{u}$ cannot immediately follow $\ket{d}$ (there has to be an active state in between).
\end{itemize}

Bravyi uses two types of projectors to increment the clock: a $4$-local ``shift'' transition from $\ket{\DL_\ell}$ to $\ket{\CL_{\ell + 1}}$ and a $2$-local transition from $\ket{\CL_\ell}$ to $\ket{\DL_\ell}$ during which a computational gate can be applied. The transition operators are
\begin{align}
    \PiLshift_{\ell, \ell+1} &= \frac{1}{2}\big(\ket{a_2 u} - \ket{d a_1}\big)\big(\bra{a_2 u} - \bra{d a_1}\big)_{2\ell - 1, 2\ell, 2\ell + 1, 2\ell +2}\\
    &= \frac{1}{2}\big(\ketbrab{a_2 u} + \ketbrab{d a_1} - \ket{a_2 u}\bra{d a_1} - \ket{d a_1}\bra{a_2 u} \big)_{2\ell - 1, 2\ell, 2\ell + 1, 2\ell +2}   \\
    \PiLgate_{\ell}(U) &= \frac{1}{2}\left( I \otimes \ketbrab{a_1} + I \otimes \ketbrab{a_2} - U^\dagger \otimes \ket{a_1}\bra{a_2} - U \otimes \ket{a_2} \bra{a_1}\right), 
\end{align}
where the first terms of $\PiLgate$ act on the computational space, and the second on the $2\ell-1$ and $2\ell$-th qubits of the $L$ register. Note that $\PiLshift$ and $\PiLgate$ are indeed projectors and that their locality is $4$ when $U$ is $2$-local. Restricted to the ground state of $\Hbrav$ we have (see \cite[Eqns. 22\&23]{bravyi06}),
\begin{align}
    \Pibrav \PiLshift_{\ell, \ell + 1} \Pibrav = \frac{1}{2}\big(\ketbrab{\DL_\ell} + \ketbrab{\CL_{\ell + 1}} - \ket{\DL_{\ell}}\bra{\CL_{\ell + 1}} - \ket{\CL_{\ell + 1}}\bra{\DL_\ell}\big) \label{eq:clockshiftL}
\end{align}
and 
\begin{align}
\begin{split}
    \big(I_C \otimes \Pibrav\big) \PiLgate_{\ell}(U) \big(I_C \otimes \Pibrav) =& \frac{1}{2} \big(I \otimes \ketbrab{\CL_\ell} + I \otimes \ketbrab{\DL_{\ell}}\\ &\;- U \otimes \ket{\DL_{\ell}}\bra{\CL_{\ell}} - U^\dagger \otimes \ket{\CL_{\ell}}\bra{\DL_\ell}\big). \label{eq:clockgateL}
\end{split}
\end{align}
Here $\Pibrav$ is the projector on the ground space of $\Hbrav$.

\subsection{Compressed clock construction}
We now combine both clocks into a single compressed clock with $T = S_KS_L$ time steps. The idea is to keep incrementing the $L$ clock until it is full, and then increment the $K$ clock, reset the $L$ clock and repeat. As resetting the $L$ clock turns out to be difficult, we use a ``snaking'' technique instead. That is, we alternate incrementing and decrementing the $L$ clock. 

The total Hamiltonian will be
\begin{align}
\label{eq:Hcompressedclock}
    \Hcomp = \HPU_K \otimes I_L + I_K \otimes \Hbrav_L.
\end{align}
Clearly, the ground state is spanned by states of the form $\ket{\CK_k}\ket{\CL_\ell}$ and $\ket{\CK_k}\ket{\DL_\ell}$ with $k \le S_K$ and $\ell \le S_L$. We will define the compressed time steps as
\begin{align}
    \ket{\Ccomp_1} &= \ket{\CK_1}\ket{\CL_1}  \\
    \ket{\Dcomp_1} &= \ket{\CK_1}\ket{\DL_1} \\
    \ket{\Ccomp_2} &= \ket{\CK_1}\ket{\CL_2} \\
    &\;\vdots\nonumber \\
    \ket{\Ccomp_{S_L}} &= \ket{\CK_1}\ket{\CL_{S_L}} \\
    \ket{\Dcomp_{S_L}} &= \ket{\CK_1}\ket{\DL_{S_L}} \\
    \ket{\Ccomp_{S_L + 1}} &= \ket{\CK_2}\ket{\DL_{S_L}} \\
    \ket{\Dcomp_{S_L + 1}} &= \ket{\CK_2}\ket{\CL_{S_L}} \\
    \ket{\Ccomp_{S_L + 2}} &= \ket{\CK_2}\ket{\DL_{S_L - 1}} \\
    &\;\vdots \nonumber \\
    \ket{\Dcomp_{T}} &= \ket{\CK_{S_K}}\ket{\DL_{S_L}}. \label{eq:finaltimestep}
\end{align}
That is, writing $t = (k - 1)S_L + \ell$, we have
\begin{align}
\label{eq:Ccomp}
    \ket{\Ccomp_t} = \begin{cases}
        \ket{\CK_k}\ket{\CL_\ell} \quad&\text{when } k \text{ is odd}\\
        \ket{\CK_k}\ket{\DL_{S_L - \ell + 1}} \quad&\text{when } k \text{ is even},
    \end{cases}
\end{align}
and
\begin{align}
\label{eq:Dcomp}
    \ket{\Dcomp_t} = \begin{cases}
        \ket{\CK_k}\ket{\DL_\ell} \quad&\text{when } k \text{ is odd}\\
        \ket{\CK_k}\ket{\CL_{S_L - \ell + 1}} \quad&\text{when } k \text{ is even}.
    \end{cases}
\end{align}
Note that, perhaps confusingly, the definition of $\ket{\Ccomp_t}$ involves $\ket{\DL_\ell}$ and vice-versa when $k$ is even. In \cref{eq:finaltimestep} we assumed that $S_K$ is odd (which is without loss of generality).
 
\paragraph{Incrementing the compressed clock}
We will again define two separate incrementing projectors. One ``shift'' operator incrementing $\ket{\Dcomp_t}$ to $\ket{\Ccomp_{t+1}}$ and a ``gate'' operator incrementing $\ket{\Ccomp_t}$ to $\ket{\Dcomp_t}$ while applying the unitary $U$ to the computational register.

There are several different cases that need to be distinguished. We begin with the most straightforward case, where $L$ is being incremented and has not reached the end. This corresponds with odd $k$ and $\ell < S_L$. In this case
\begin{align}
    \Pishift_{t,t+1} &= \ketbrab{1}_{P_k} \otimes \PiLshift_{\ell, \ell + 1} \\
    \Pigate_t(U) &= \ketbrab{1}_{P_k} \otimes \Pigate_\ell(U)
\end{align}
Denoting by $\Picomp$ the projector on the ground space of $\Hcomp$, we have
\begin{align}
\label{eq:shiftodd}
\begin{split}
    \Picomp \Pishift_{t,t+1} \Picomp =& \frac{1}{2}\big(\ketbrab{\Dcomp_t} + \ketbrab{\Ccomp_{t+1}} \\ &\;- \ket{\Dcomp_t}\bra{\Ccomp_{t+1}} - \ket{\Ccomp_{t+1}}\bra{\Dcomp_{t}}\big), 
\end{split}
\end{align}
and
\begin{align}
\label{eq:gateodd}
    \begin{split}
    \big(\Picomp \otimes I_{C}\big) \Pigate_t(U) \big(\Picomp \otimes I_{C}\big) &= \frac{1}{2} \big( \ketbrab{\Ccomp_t} \otimes I + \ketbrab{\Dcomp_t} \otimes I \\
        &\;- \ket{\Dcomp_t}\bra{\Ccomp_t} \otimes U - \ket{\Ccomp_t}\bra{\Dcomp_t}\otimes U^\dagger\big).
    \end{split}
\end{align}
as can be seen by combining \cref{eq:clockreadK,eq:clockshiftL,eq:clockgateL}.

When $L$ is being decremented but has not reached the end, that is, when $k$ is even and $\ell < S_L$, the projectors are similar
\begin{align}
    \Pishift_{t,t+1} &= \ketbrab{1}_{P_k} \otimes \PiLshift_{S_L - \ell, S_L - \ell + 1} \\
    \Pigate_t(U) &= \ketbrab{1}_{P_k} \otimes \Pigate_{S_L - \ell + 1}(U^\dagger).
\end{align}
Combining \cref{eq:clockreadK,eq:clockshiftL,eq:clockgateL} with the $k$ is even case of \cref{eq:Ccomp,eq:Dcomp} now yields
\begin{align}
\begin{split}
    \Picomp \Pishift_{t,t+1} \Picomp &= \frac{1}{2} \ketbrab{\CK_k} \otimes \Big(\ketbrab{\DL_{S_L - \ell}} + \ketbrab{\CL_{S_L - \ell + 1}} \\&\qquad - \ket{\DL_{S_L - \ell}}\bra{\CL_{S_L - \ell + 1}} - \ket{\CL_{S_L - \ell + 1}}\bra{\DL_{S_L -\ell}} \Big)
\end{split}
     \\ &= \frac{1}{2}\big( \ketbrab{\Dcomp_t} + \ketbrab{\Ccomp_{t+1}} - \ket{\Dcomp_t}\bra{\Ccomp_{t+1}} - \ket{\Ccomp_{t+1}}\bra{\Dcomp_{t}}\big), \label{eq:shifteven}
\end{align}
and
\begin{align}
\label{eq:gateeven}
    \begin{split}
    \big(\Picomp \otimes I_{C}\big) \Pigate_t(U) \big(\Picomp \otimes I_{C}\big) =& \frac{1}{2}\big(\ketbrab{\Ccomp_t} \otimes I + \ketbrab{\Dcomp_t} \otimes I \\
        &\;- \ket{\Dcomp_t}\bra{\Ccomp_t} \otimes U - \ket{\Ccomp_t}\bra{\Dcomp_t}\otimes U^\dagger\big).
    \end{split}
\end{align}

Finally, we need projectors for incrementing $K$ when $L$ has reached the end. Again, this will depend on whether $k$ is odd or even.
\begin{align}
    \Pishift_{t,t+1} = \begin{cases}
        \PiKinc_{k,k+1} \otimes \ketbrab{a_2}_{2S_L - 1, 2S_L} \quad& \text{when } k \text{ is odd} \\
        \PiKinc_{k,k+1} \otimes \ketbrab{a_1}_{1, 2} \quad& \text{when } k \text{ is even.}
    \end{cases}
\end{align}
Note that we indeed have
\begin{align}
\label{eq:shiftend}
\begin{split}
    \Picomp \Pishift_{t,t+1} \Picomp =& \frac{1}{2}\big(\ketbrab{\Dcomp_{t}} + \ketbrab{\Ccomp_{t+1}} \\&\;- \ket{\Dcomp_t}\bra{\Ccomp_{t+1}} - \ket{\Ccomp_{t+1}}\bra{\Dcomp_t}\big).
\end{split}
\end{align}

\paragraph{Finalizing the compressed clock construction}
Having constructed our $5$-local compressed clock and the corresponding propagation operators $\Pishift, \Pigate$, it only remains to define an in- and output term before we can apply the standard circuit-to-Hamiltonian construction to obtain our result. These are defined straightforwardly:
\begin{align}
    \Hin &= \sum_{i} \ketbrab{1}_{P_{1}} \otimes \ketbrab{a_1}_{1,2} \otimes \ketbrab{1}_{\text{anc}_i} \\
    \Hout &= \ketbrab{1}_{P_{S_K}} \otimes \ketbrab{a_2}_{2S_L - 1, 2S_L} \otimes \ketbrab{0}_{C_{\text{out}}},
\end{align}
where the $\text{anc}_i$ denote the ancilla qubits of the computational register and $C_{\text{out}}$ denotes the designated output qubit.

\begin{theorem}
\label{thm:compressedcirctoQSAT}
    Let $U = U_T \dots U_1$ be a quantum verification circuit consisting of $T$ 2-local gates applied in succession to a workspace of $n$ qubits. Let $S_k, S_L$ be such that $S_KS_L = T$. Then, there exists a sum of $5$-local projectors $H$ acting on $n + O(S_K) + O(S_L)$ qubits, such that 
    \begin{itemize}
        \item If there is some $\ket{\psi}$ that is accepted by $U$ with probability $1$, then $H$ has a zero-energy ground state.
        \item If all states $\ket{\psi}$ are accepted by $U$ with probability at most $1- \varepsilon$, then $\lambda(H) \ge \Omega\left(\frac{{\varepsilon}}{T^3}\right)$.
    \end{itemize}
    In particular, setting $S_K = S_L = \sqrt{T}$ the Hamiltonian acts on $n + O(\sqrt{T})$ qubits. Note that using padding we can assume $\sqrt{T}$ to be an integer without loss of generality. 
\end{theorem}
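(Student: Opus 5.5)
We take $H = \Hin + \Hout + \Hprop + J\,\Hcomp$, where $\Hprop=\sum_t\big(\Pishift_{t,t+1}+\Pigate_t(U_t)\big)$ and $J$ is a polynomially large penalty weight. To keep every term a projector we do not scale; instead we repeat each term of $\Hcomp$ $J$ times, which is allowed since $m=\poly(n)$.

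\textbf{Locality and qubit count.} Each $\Pigate_t$ is $\ketbrab{1}_{P_k}\otimes\PiLgate$. Here $\PiLgate$ is $2$-local on the $L$ register and $2$-local on the workspace, so $\Pigate_t$ is $5$-local. Each $\Pishift$ is either $1+4$ or $3+2$ local. $\Hin$ and $\Hout$ are $4$-local, $\HPU$ is at most $3$-local, and $\Hbrav$ is $4$-local. The qubit count is $n+2S_K+2S_L$, and the ancillas are part of the workspace.

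\textbf{Completeness.} Let $\ket\psi$ be accepted with probability $1$ and set $\ket\eta=\frac{1}{\sqrt{2T}}\sum_t\big(\ket{\Ccomp_t}\otimes U_{<t}\ket\psi+\ket{\Dcomp_t}\otimes U_{\le t}\ket\psi\big)$. This history state is in the kernel of $\Hcomp$ by the Claim and the Bravyi ground-space description. It is killed by each propagation projector: by \cref{eq:shiftodd,eq:gateodd,eq:shifteven,eq:gateeven,eq:shiftend}, each projector, restricted to the clock ground space, is exactly $\frac12$ times the projector onto the antisymmetric combination of two consecutive steps. There is one caveat. These restrictions are sandwiched by $\Picomp$, so we must also check that the unsandwiched projectors annihilate $\ket\eta$. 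This holds because $\ket\eta$ already lies in $\operatorname{Im}\Picomp$, so $\Pi\ket\eta=\Pi\Picomp\ket\eta$. Since $\Pi$ is positive semidefinite, $\bra\eta\Pi\ket\eta=0$ then gives $\Pi\ket\eta=0$. $\Hin$ vanishes because the ancillas start at $0$, and $\Hout$ vanishes because of acceptance with certainty.

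\textbf{Soundness.} Within $\operatorname{Im}\Picomp$, the operator $\Picomp(\Hin+\Hout+\Hprop)\Picomp$ acts on a $2T$-step clock and matches the Kitaev/Bravyi Hamiltonian $\frac12\sum_t(\ldots)$ exactly. Apply the standard unitary change of basis $W=\sum_t\ketbrab{\Ccomp_t}\otimes U_{<t}+\ketbrab{\Dcomp_t}\otimes U_{\le t}$. This maps the propagation part to a path Laplacian on $2T$ sites with spectral gap $\Omega(1/T^2)$. The usual geometric lemma (Kitaev's, or the version in \cite{bravyi06}) then gives ground energy $\Omega(\varepsilon/T^3)$ on this subspace.

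\textbf{Lifting to the full space.} Apply the projection lemma of Kempe--Kitaev--Regev: $\Hcomp$ has gap at least $1$ above its zero kernel, since it is a sum of commuting-free projectors with integer-valued computational-basis structure. More precisely, $\HPU$ and $\Hbrav$ are diagonal in the computational basis, so their gap is $1$. Meanwhile $\|\Hin+\Hout+\Hprop\|=\poly(T)$. Choosing $J=\poly(T)$ then gives $\lambda(H)\ge\Omega(\varepsilon/T^3)-O(1/\poly)$, so $J$ must be taken large enough that the error is negligible.

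\textbf{Main obstacle.} The main work is verifying the sandwich identities at the turning points. When $k$ increments, the shift term uses $\PiKinc$ together with a one-clock-particle readout, and we must make sure it does not couple stray ground-space states. For example, $\ketbrab{a_2}_{S_L}$ could also be nonzero on $\ket{\CK_{k'}}\ket{\DL_{S_L}}$ for other $k'$; but $\PiKinc$ acting on $P_kP_{k+1}U_{k+1}$ restricted to the clock space only touches $\CK_k$ and $\CK_{k+1}$, so there is no stray coupling. We also need a careful check that distinct propagation terms act on disjoint pairs of clock states in the ground space, so that the restriction really is a path graph.
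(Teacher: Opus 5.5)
Your proof is correct, but it handles illegal clock states differently from the ``standard analysis'' the paper's sketch relies on. You put a penalty weight $J=\poly(T)$ on $\Hcomp$, realized by repeating its terms, and then invoke the Kempe--Kitaev--Regev projection lemma.

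The paper keeps every term at unit weight and uses invariance instead. Since $\HPU$ and $\Hbrav$ are diagonal, both the legal clock space and its complement are spanned by computational-basis configurations. Every propagation term couples only two configurations that differ in one of three ways: by $a_1\leftrightarrow a_2$ on one particle, by $a_2u\leftrightarrow da_1$ on adjacent particles, or by $100\leftrightarrow 011$ on $P_kP_{k+1}U_{k+1}$. In each case one configuration is legal iff the other is; for example, a legal $K$-string with $P_k=1$ must be $\ket{\CK_k}$, and its partner is $\ket{\CK_{k+1}}$. Hence $H$ is block diagonal. The illegal block has energy at least $1$ from $\Hcomp$ alone, and the legal block is exactly the path Hamiltonian you analyze with Kitaev's geometric lemma.

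Your route is more robust, since it needs no invariance check. It does, however, cost something the paper uses later. You need $J=\Omega(\|H_2\|^2T^3/\varepsilon)$, with $H_2$ the non-clock part, so repetition multiplies the number of constraints and the degree of every clock qubit by $\poly(T)$. It also makes $H$ depend on $\varepsilon$, which is harmless here because $\varepsilon=1$ in the application. All of this is fine for the theorem as stated, which claims only locality and qubit count. But it would break the degree count following the theorem and \cref{cor:hardnesssmallhigh}, which require $O(\sqrt{T})$ constraints per clock qubit and $O(N)$ constraints overall.

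Three minor slips:
\begin{itemize}
\item Each restricted propagation term is exactly the projector onto $\frac{1}{\sqrt2}(\ket{a}-\ket{b})$, not half of it.
\item Consecutive terms share endpoints, so what you need is that they are distinct edges of a path, not disjoint pairs.
\item The gap of $\Hcomp$ follows simply from its being a sum of diagonal projectors; ``commuting-free'' is not needed.
\end{itemize}
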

\begin{proof}[Proof sketch]
    Consider registers $K, L, C$, where $K$ and $L$ are clock registers consisting of $2S_K$ and $2S_L$ qubits, respectively, and $C$ is the computational register consisting of $n$ qubits. We split the $K$ register into two parts $P,U$ of $S_K$ qubits.

    Let the sums of projectors $\Hcomp, \Hin, \Hout, \Pigate(U)$ and $\Pishift$ acting on the registers $K,L,C$ be as defined earlier in this section and consider the Hamiltonian
    \begin{align}
        H = \Hcomp_{KL} \otimes I_C + \Hin + \Hout + \Pigate_T(U_{T}) + \sum_{t = 1}^{T - 1} \left(\Pigate_t(U_{t}) + \Pishift_{t, t+1}\right).
    \end{align}
    
    We have shown that the $\Pigate$ and $\Pishift$ projectors act appropriately on the legal clock space in \cref{eq:shiftodd,eq:gateodd,eq:shifteven,eq:gateeven,eq:shiftend}. A standard analysis of circuit-to-Hamiltonian constructions (see \cite{Kitaev+02,bravyi06}) thus shows that if $\ket{\psi}$ is accepted by $U$ with probability $1$ then 
    \begin{align}
    \begin{split}
        \ket{\psi_{hist}} \propto& \ket{\Ccomp_1}\ket{\psi}_{C_{in}}\ket{\vec{0}}_{C_{anc}} + \ket{\Dcomp_T} \otimes U_T \dots U_1\ket{\psi}_{C_{in}}\ket{\vec{0}}_{C_{anc}} \\ &\;+  \sum_{t = 1}^{T - 1} \big(\ket{\Dcomp_t} + \ket{\Ccomp_{t + 1}}\big) \otimes U_t \dots U_1\ket{\psi}_{C_{in}}\ket{\vec{0}}_{C_{anc}}
    \end{split}
    \end{align}
    is a zero-energy state of $H$. Conversely, if all states are rejected by $U$ with probability at least $\varepsilon$ then $\lambda(H) \ge \Omega\left(\frac{{\varepsilon}}{T^3}\right)$.
\end{proof}

\paragraph{Counting qubit degrees}
To prove the optimality of our upper bound algorithm in terms of regularity, it will be useful to count how often $H$ acts on each of the qubits.

From \cref{eq:Hunary,eq:Hinit,eq:Hlesstwo,eq:Hsync} it is clear that $\HPU$ acts on each of the $P$ and $U$ qubits at most $O(S_K)$ times. Similarly, inspection of \cref{eq:HBravyi} shows that $\Hbrav_L$ acts on every $L$ qubit at most $O(S_L)$ times.

For the incrementing operators, note that the qubit encoding the value $k$ (so $P_k$) is acted upon $O(S_L)$ times: $O(1)$ times for each of the $S_L$ values of $\ell$. Similarly, the qubits in $L$ get acted upon $O(1)$ times for every value of $k$, so $O(S_K)$ times in total.

$\Hin$ acts upon the $P_1$ and $L_1, L_2$ qubits once for every ancilla qubit. Finally, $\Hout$ acts on all qubits at most once.

The degree of the computational qubits depends fully on the encoded circuit.

\subsection{(Q)SETH hardness of quantum $5$-SAT}
We are now ready to prove our second main result
\begin{theorem}[SETH hardness of quantum SAT]
\label{thm:sethhard}
    Suppose there is a classical algorithm solving quantum $5$-SAT on $n$ qubits in time $\myO{2^{(1-\varepsilon)n}}$ for some $\varepsilon >0$, then there exists some $\varepsilon' > 0$ and for any $k$ a classical algorithm solving $k$-SAT with $n$ variables and $O(n)$ clauses in time $\myO{2^{(1-\varepsilon')n}}$, contradicting the Strong Exponential Time Hypothesis (\cref{conjecture:SETH}).

    Similarly, a quantum algorithm for quantum $5$-SAT using time $\myO{2^{(1-\varepsilon)n/2}}$ gives a quantum algorithm for $k$-SAT using time $\myO{2^{(1-\varepsilon')n/2}}$ for some $\varepsilon' > 0$, thus contradicting the Quantum Strong Exponential Time Hypothesis (\cref{conjecture:QSETH}).
\end{theorem}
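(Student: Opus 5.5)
The proof is a direct reduction that composes \cref{lem:U_compute_Phi} with \cref{thm:compressedcirctoQSAT}. Fix $k$ and a $k$-CNF formula $\Phi$ on $n$ variables with $m=O(n)$ clauses. By the sparsification lemma we may assume the clause count is linear, as in \cref{conjecture:SETH}. \cref{lem:U_compute_Phi} gives, in $\poly(n)$ time, a circuit $U_\Phi$ with $T=O(n)$ elementary gates on $n+O(\log n)$ qubits. It writes $\Phi(x)$ into the output qubit for every classical input $x$.

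First we make $U_\Phi$ fit the hypotheses of \cref{thm:compressedcirctoQSAT}. Every Clifford+$T$ gate is at most $2$-local. The gates in $U_\Phi$ may include Toffoli-type gates used for the counter; each of these is replaced by a constant-size Clifford+$T$ decomposition, so $T$ stays $O(n)$. We pad with identity gates so that $\sqrt{T}$ is an integer and $S_K=\sqrt T$ is odd.

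Next we check completeness and soundness at the level of the verifier. If $\Phi$ is satisfiable with assignment $x$, the input $\ket{x}$ is accepted with probability exactly $1$. If $\Phi$ is unsatisfiable, every computational basis input is rejected with certainty. A general input $\ket{\psi}=\sum_x\alpha_x\ket{x}$ produces the output state $\sum_x\alpha_x\ket{0}_{\mathrm{out}}\ket{\psi_x}$, so its acceptance probability is $0$. Hence we may take $\varepsilon=1$ in \cref{thm:compressedcirctoQSAT}. This step uses that $U_\Phi$ maps basis inputs to states with a definite output bit, which \cref{lem:U_compute_Phi} guarantees. The ancillas are forced to $\ket{0}$ at time $1$ by $\Hin$.

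Applying \cref{thm:compressedcirctoQSAT} with $S_K=S_L=\sqrt T$ yields, in polynomial time, a quantum $5$-SAT instance $H$ on
\[
N=n+O(\log n)+O(\sqrt{T})=n+O(\sqrt n)
\]
qubits. If $\Phi$ is satisfiable, $H$ is frustration-free. If not, $\lambda(H)\ge\Omega(1/T^3)=\Omega(1/n^3)$, which is a valid inverse-polynomial threshold $b$ in \cref{def:qksat}.

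Finally we count running time. An algorithm solving quantum $5$-SAT in $O^*(2^{(1-\varepsilon)N})$ time runs in
\[
O^*\bigl(2^{(1-\varepsilon)(n+O(\sqrt n))}\bigr)=O^*\bigl(2^{(1-\varepsilon)n+O(\sqrt n)}\bigr)\le O^*\bigl(2^{(1-\varepsilon/2)n}\bigr)
\]
for large $n$, since $2^{O(\sqrt n)}\le 2^{\varepsilon n/2}$ eventually. So $\varepsilon'=\varepsilon/2$ works, and it is independent of $k$, which contradicts SETH. The quantum case is identical: $2^{(1-\varepsilon)N/2}\le 2^{(1-\varepsilon/2)n/2}$, contradicting QSETH.

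The main obstacle is not the counting but making sure \cref{thm:compressedcirctoQSAT} applies verbatim, in two respects. First, its soundness bound needs every witness to be rejected with probability at least $\varepsilon$, which here is handled by the perfect-soundness argument above. Second, the size overhead must be genuinely $o(n)$. This depends on $T=O(n)$ rather than $O(n\log^2 n)$; even $\tilde O(n)$ would suffice, since $\sqrt{T}=o(n)$ either way. The remaining point to verify is that the $\Hin$ terms, one per ancilla, stay $5$-local.
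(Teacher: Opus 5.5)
Your proposal is correct and follows the paper's proof: compose \cref{lem:U_compute_Phi} with \cref{thm:compressedcirctoQSAT} at $S_K=S_L=\sqrt{T}$, obtain a $5$-local projector instance on $N=n+o(n)$ qubits with an inverse-polynomial gap, and absorb the subexponential overhead by taking any $\varepsilon'<\varepsilon$. The extra checks you add are sound details the paper leaves implicit: perfect soundness for superposed inputs by linearity (so $\varepsilon=1$ in \cref{thm:compressedcirctoQSAT}), padding to make $\sqrt{T}$ an integer, and the locality of the input terms.
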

\begin{proof}
    Start by assuming the existence of an algorithm solving quantum $5$-SAT on $n$ qubits in time $\myO{2^{(1-\varepsilon)n}}$ for some $\varepsilon >0$. Let $k$ be arbitrary and consider the following algorithm solving $k$-SAT:
    \begin{enumerate}
        \item Let $\Phi$ be the input formula and let $n$ be the number of variables of $\Phi$. Use \cref{lem:U_compute_Phi} to compute the quantum circuit $U_\Phi$ that, on input $x$ outputs $\Phi(x)$ with probability $1$. The circuit acts on $n$ input qubits and $O(\log n)$ ancilla qubits, and consists of $T = O(n)$ elementary gates.
        \item Use the compressed circuit-to-QSAT construction from \cref{thm:compressedcirctoQSAT} with $S_K = S_L = \sqrt{T}$ on $U$ to construct a Hamiltonian $H$ such that $\lambda(H) = 0$ if $\Phi$ has a satisfying assignment and $\lambda(H) \ge \Omega\left(\frac{1}{T^3}\right)$ if $\Phi$ is unsatisfiable. The Hamiltonian $H$ will be a sum of $5$-local projectors acting on $N = n + O(\log n) + O(\sqrt{T}) = n + o(n) = n(1 + o(1))$ qubits.
        \item Use the hypothetical algorithm for quantum $5$-SAT to decide whether $\lambda(H) = 0$ or $\lambda(H) \ge \frac{1}{\poly(n)}$, and thus decide whether $\Phi$ is satisfiable or not. By assumption, this can be done in time $\myO{2^{(1-\varepsilon)N}}$. In particular, for any $\varepsilon > \varepsilon' > 0$ the runtime is $\myO{2^{(1-\varepsilon')n}}$ for sufficiently large $n$.
    \end{enumerate}
    Note that the $\myO{2^{(1-\varepsilon')n}}$ runtime holds for any $k$. We have thus contradicted the Strong Exponential Time Hypothesis (\cref{conjecture:SETH}) as desired.

    Replacing the classical $\myO{2^{(1-\varepsilon)n}}$ time algorithm for QSAT with a $\myO{2^{(1-\varepsilon)n/2}}$ quantum algorithm yields a $\myO{2^{(1-\varepsilon')n/2}}$ time quantum algorithm for $k$-SAT, which contradicts QSETH (\cref{conjecture:QSETH}). 
\end{proof}

\subsection{(Q)SETH hardness with somewhat balanced degree}
In order to contrast \cref{cor:smallnumberhighdegree}, we would like to prove (Q)SETH hardness for as regular instances as possible. The construction from \cref{thm:sethhard} is already pretty good in this regard: the $O(\sqrt{n})$ clock qubits get acted on $O(\sqrt{n})$ times and the input qubits $x_i$ of $U_\Phi$ get acted on only $O(\operatorname{occ}(x_i))$ times, where $\operatorname{occ}(x_i)$ denotes the number of clauses of $\Phi$ that $x_i$ appears in. 

The only issue is that the $O(\log n)$ ancilla qubits from the circuit $U_\Phi$ get acted upon a linear number of times. We now redesign the counter construction so all ancilla qubits also have degree at most $O(\sqrt{n})$. We also make sure the runtime of the verification algorithm stays linear.

\begin{theorem}[Bounded-degree \(k\)-SAT verification circuit]
\label{thm:two-unary-sat-verifier}
Fix positive integers \(k\) and \(\Delta\).
Let
\[
    \Phi(x_1,\ldots,x_n)
    =
    C_1\wedge C_2\wedge\cdots\wedge C_m
\]
be a \(k\)-CNF formula. Assume that 
\[
    \operatorname{occ}_{\Phi}(x_j)\le \Delta
    \qquad
    \text{for every }j\in[n],
\]
where \(\operatorname{occ}_{\Phi}(x_j)\) counts the number of (positive or negative) occurrences of variable $x_j$. This implies $m \le \Delta n$

\[
    B:=\left\lceil\sqrt{m+1}\right\rceil = O_\Delta(\sqrt{n}),
    \qquad
    L:=\left\lceil\sqrt m\right\rceil = O_\Delta(\sqrt{n}),
    \qquad
    R:=\left\lceil\frac{m}{L}\right\rceil = O_\Delta(\sqrt{n}).
\]
Then there exists a polynomial-time constructible quantum circuit \(U_\Phi\), acting on an \(n\)-qubit input register \(\mathsf{in}\), two \(B\)-qubit one-hot unary registers \(\mathsf{row}\) and \(\mathsf{col}\), \(kR\) clean work qubits in a register \(\mathsf{wrk}\) and one output qubit \(\mathsf{out}\). The circuit is such that, for every \(x\in\{0,1\}^n\),
\begin{equation}
\label{eq:two-unary-verifier-action}
\begin{split}
    U_\Phi
    \lvert x\rangle_{\mathsf{in}}
    \ket{0}\reg{row \cup col \cup wrk \cup out}
    &=
    \lvert x\rangle_{\mathsf{in}}
    \ket{\psi_x}\reg{row \cup col \cup wrk}
    \lvert\Phi(x)\rangle_{\mathsf{out}},
\end{split}
\end{equation}
for some state $\lvert\psi_x\rangle$ depending on $x$.

Moreover, $U_\Phi$ consists of $O_{k,\Delta}(n)$ elementary gates and the number of gates $d_{U_\Phi}(q)$ acting non-trivially on qubit $q$ satisfy
\[
    d_{U_\Phi}(x_j)
    =
    O_k\bigl(\operatorname{occ}_{\Phi}(x_j)\bigr)
\]
for every input qubit \(x_j\),
\[
    d_{U_\Phi}(a)=O_k(L) = O_{k,\Delta}(\sqrt{n})
\]
for every work qubit \(a\in\mathsf{wrk}\),
\[
    d_{U_\Phi}(q)=O_k(B) = O_{k,\Delta}(\sqrt{n})
\]
for every qubit
\(q\in\mathsf{row}\cup\mathsf{col}\), and
\[
    d_{U_\Phi}(\mathsf{out})=O(1),
\]
for the output qubit.
\end{theorem}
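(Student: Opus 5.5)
We will build $U_\Phi$ as a single sweep over the clauses in a fixed order, arranged in an $R\times L$ grid: clause $C_{(r-1)L+c}$ sits in row $r$ and column $c$. The counter is replaced by a two-register one-hot pointer $(\mathsf{row},\mathsf{col})$. Its intended meaning is ``the position of the first clause that failed, or the sentinel position $m+1$ if none failed so far''. Equivalently, the pointer is a one-hot encoding of a value in $\{0,\dots,B^2-1\}\supseteq\{0,\dots,m\}$, so $B=\lceil\sqrt{m+1}\rceil$ suffices.

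The circuit is cleaner if the pointer simply records \emph{whether} any clause has failed, in a rigid way. For this we use the ``rigidity'' observation from \cref{lem:U_compute_Phi}: we never need the number of true clauses, only whether all are true. The construction proceeds row by row.

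For row $r$, we first compute the clause values $C_{(r-1)L+c}(x)$ for $c=1,\dots,L$ into fresh work qubits. The $kR$ work qubits are organized as $R$ blocks of $k$ qubits, with one block reused across the $L$ clauses of the row. Alternatively they are organized as $L$ blocks reused across the $R$ rows; the degree bound $d(a)=O_k(L)$ in the statement indicates that each work qubit is touched $O(1)$ times per clause of a row. Evaluating a $k$-clause takes $O_k(1)$ Clifford+$T$ gates, namely a multi-controlled Toffoli with negated controls via $O(k)$ Toffolis.

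Each clause result is then folded into the column pointer with a controlled ``advance'' that is applied only if the clause is true. This is a controlled swap of adjacent $\mathsf{col}$ qubits $c\to c+1$, controlled on the clause bit. Starting from $\mathsf{col}$ one-hot at position $1$, after a row the $1$ sits at position $L+1$ iff all clauses in the row were true. Crucially, if some clause fails the pointer gets stuck: later swaps act on positions not holding the $1$, and this is exactly the rigidity property. Each $\mathsf{col}$ qubit $c$ is touched $O(1)$ times per row, hence $O(R)=O(\sqrt n)$ times overall.

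At the end of a row we uncompute the clause bits so the work qubits return to clean. We then perform a row step controlled on $\mathsf{col}_{L+1}$: advance $\mathsf{row}$ by one position (controlled swap $r\to r+1$), and reset $\mathsf{col}$ from position $L+1$ back to position $1$ via a controlled swap conditioned on the new row bit. If $\mathsf{col}$ is stuck, the row pointer stays stuck for all later rows, because subsequent row-advances are controlled on $\mathsf{col}_{L+1}$, which is never reached again. We need $B\ge L+1$ and $B\ge R+1$, which follows from the definitions for $m\ge1$ (padding if necessary).

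Finally, $\mathsf{out}$ is set by a single CNOT from $\mathsf{row}_{R+1}$ (or a Toffoli with the appropriate $\mathsf{col}$ bit), giving $d(\mathsf{out})=O(1)$.

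The gate counts then follow directly. There are $O_k(1)$ gates per clause plus $O(1)$ per row, so $O_{k,\Delta}(m+R)=O_{k,\Delta}(n)$ in total. Input qubit $x_j$ is touched $O_k(1)$ times for each compute and uncompute of each clause containing it, giving $O_k(\operatorname{occ}(x_j))$. Each pointer qubit is touched $O(1)$ times per row or per row-step, giving $O(B)$.

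The main obstacle is the correctness argument, which is essentially a rigidity invariant. We prove by induction over clauses that after processing clause $i$ the state is $\ket{x}\ket{\mathrm{ptr}_i(x)}$, with the work register clean between rows. Here the pointer is a classical basis state equal to ``position $i+1$'' if $C_1,\dots,C_i$ are all true, and otherwise is frozen at the first failure position. Since every gate is a classical reversible gate on basis states, the computation stays in a basis state and $\ket{\psi_x}$ is in fact classical.

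The subtle points are two. First, we must check that controlled swaps never move a $0$ into a pointer position when the pointer is frozen; this holds because the swap exchanges a $0$ with a $0$. Second, we must check that the reset of $\mathsf{col}$ at the end of a row is reversible and correctly controlled. We will verify these two cases explicitly. One additional care point is whether the work registers must be uncomputed to meet the degree bound per work qubit, since reuse across clauses costs $O(1)$ per use. We will choose the allocation so that each work qubit hosts $O(L)$ clause evaluations.
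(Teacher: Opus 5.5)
Your construction does compute $\Phi(x)$. It uses a row-major $R\times L$ layout, column swaps controlled by the clause bit, and an end-of-row step that swaps $\mathsf{row}_r\leftrightarrow\mathsf{row}_{r+1}$ controlled on $\mathsf{col}_{L+1}$ and then resets $\mathsf{col}$ controlled on $\mathsf{row}_{r+1}$. However, the invariant you plan to prove is false, and so is the reason you give for the row pointer freezing.

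The problem is that the in-row column swaps are controlled only by the clause bit. So after a failure the column pointer is \emph{not} frozen. If clause $c_0$ of row $r'$ fails, then in row $r'+1$ the swaps on positions $(c_0-1,c_0)$ and $(c_0,c_0+1)$ act on the position holding the $1$. Concretely, take $m=4$ (so $L=R=2$, $B=3$) with $C_1$ false and $C_3,C_4$ true. The column pointer sits at $1$ after row $1$, then moves to $2$ and then to $3=L+1$ during row $2$. Hence ``frozen at the first failure position'', ``the swap exchanges a $0$ with a $0$'', and ``$\mathsf{col}_{L+1}$ is never reached again'' are all false, and your induction breaks in the first row after a failure.

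What actually saves the output is rigidity of the \emph{row} register alone:
\begin{itemize}
\item The step after row $r$ swaps the specific pair $\mathsf{row}_r,\mathsf{row}_{r+1}$. So once the row pointer is stuck at some $r'<r$, every later row step exchanges two zeros, and the reset (controlled on $\mathsf{row}_{r+1}=0$) never fires.
\item Conversely, if the row pointer is at $r$ at the start of row $r$, the column pointer was just reset to $1$. Within that row your one-dimensional rigidity argument is then valid, and $\mathsf{col}_{L+1}=1$ at the end of the row iff all its clauses hold.
\end{itemize}
Alternatively, add $\mathsf{row}_r$ as a second control on each column swap of row $r$. Then every transition is a transposition of two specific two-register basis states and your stated invariant becomes true. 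This is what the paper does: its $J_i$ is a $\operatorname{CSWAP}$ controlled by the shared coordinate qubit, costing $O(B)$ touches per control qubit.

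There are two further bookkeeping problems.

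\textbf{Register sizes.} The claim that $B\ge L+1$ and $B\ge R+1$ follow from the definitions is wrong: $\lceil\sqrt{m+1}\rceil=\lceil\sqrt m\rceil$ unless $m$ is a perfect square (for example, $m=3$ gives $B=L=R=2$). Your layout needs $L+1$ column qubits and $R+1$ row qubits. That is harmless asymptotically, but it is not the $B$-qubit registers of the statement. The paper's snake path avoids this: consecutive cells differ in one coordinate by one step, so there is no overflow cell and no reset. Then $B^2\ge m+1$ cells suffice, and every transition is a single clause-conditioned $\operatorname{CSWAP}$.

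\textbf{Work register.} Your work-register plan is inconsistent, since you cannot hold all $L$ clause bits of a row until the row ends inside one block of $k$ qubits. Instead, compute $C_i$, apply the controlled swap, and uncompute immediately. This is legal because the swap does not touch $\mathsf{in}$. Use block $r$ for all clauses of row $r$; this gives exactly $kR$ work qubits, each touched $O_k(L)$ times, matching the paper's partition into $R$ groups.

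With these repairs your variant proves the theorem, up to register sizes. It even gives row qubits only $O(1)$ touches, while the paper's snake buys full two-register rigidity and an exact $B\times B$ fit.
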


\begin{proof}
\noindent\textbf{Snaking counter construction.}
We first construct a sequence of \(m+1\) distinct counter states in \([B]\times[B]\), arranged in a snaking pattern. For every
\[
    t\in\{0,1,\ldots,B^2-1\} \quad \text{(note $B^2 \ge m+1$)},
\]
define
\[
    r_t
    :=
    1+\left\lfloor\frac{t}{B}\right\rfloor
\]
and
\[
    s_t
    :=
    t-B\left\lfloor\frac{t}{B}\right\rfloor
    \in\{0,\ldots,B-1\}.
\]
Set
\[
    c_t
    :=
    \begin{cases}
        s_t+1,
            & r_t\text{ is odd},\\[1mm]
        B-s_t,
            & r_t\text{ is even}.
    \end{cases}
\]
Thus, the sequence $(r_t,c_t)$ goes as
\[
\begin{split}
    &(1,1),(1,2),\ldots,(1,B), (2,B),(2,B-1),\ldots,(2,1), (3,1),(3,2),\ldots,(3,B), (4,B), (4,B-1) \dots
\end{split}
\]
For \((r,c)\in[B]\times[B]\), define the encoded counter state
\[
    \lvert r,c\rangle_{\mathsf{cnt}}
    :=
    \lvert e_r\rangle_{\mathsf{row}}
    \lvert e_c\rangle_{\mathsf{col}}.
\]
where \(\lvert e_j\rangle\) denotes the \(B\)-bit string having a single \(1\) in position \(j\).
Let
\[
    \mathcal{C}_B
    :=
    \operatorname{span}
    \{
        \lvert r,c\rangle:
        (r,c)\in[B]\times[B]
    \}
\]
be the subspace in which both registers have Hamming weight one.

\par\medskip
\noindent\textbf{Incrementing the counter.}
Fix \(i\in[m]\).
There are two possible types of transition to increment the counter.

If
\[
    r_{i-1}=r_i=:r
    \quad\text{and}\quad
    c_{i-1}\ne c_i,
\]
define
\[
    J_i
    :=
    \operatorname{CSWAP}_{
        \mathsf{row}[r];
        \mathsf{col}[c_{i-1}],
        \mathsf{col}[c_i]
    }.
\]
That is, the row qubit \(\mathsf{row}[r]\) controls a swap between the two relevant column qubits.

If
\[
    c_{i-1}=c_i=:c
    \quad\text{and}\quad
    r_{i-1}\ne r_i,
\]
define
\[
    J_i
    :=
    \operatorname{CSWAP}_{
        \mathsf{col}[c];
        \mathsf{row}[r_{i-1}],
        \mathsf{row}[r_i]
    }.
\]

In either case, let \(g_i\) denote the control qubit and let \(u_i,v_i\) denote the two swap targets, so that
\[
    J_i=\operatorname{CSWAP}_{g_i;u_i,v_i}.
\]
Note that
\begin{equation}
\label{eq:two-unary-transposition}
    J_i\lvert r,c\rangle
    =
    \begin{cases}
        \lvert r_i,c_i\rangle,
            & (r,c)=(r_{i-1},c_{i-1}),\\[1mm]
        \lvert r_{i-1},c_{i-1}\rangle,
            & (r,c)=(r_i,c_i),\\[1mm]
        \lvert r,c\rangle,
            & \text{otherwise}.
    \end{cases}
\end{equation}
Thus, one may apply \(J_i\) without first checking whether the counter is in the state \(\lvert r_{i-1},c_{i-1}\rangle\), since \(J_i\) exchanges only the two states
\[
\lvert r_{i-1},c_{i-1}\rangle
\quad\text{and}\quad
\lvert r_i,c_i\rangle
\]
and fixes every other state in $\cC_B$.

\par\medskip
\noindent\textbf{Checking the clauses}
Next, let \(\widehat J_i\) be the operator that applies \(J_i\) precisely when the clause \(C_i\) is false. More explicitly, for every computational basis assignment \(x\),
\begin{equation}
\label{eq:false-controlled-transition}
    \widehat J_i
    \lvert x\rangle_{\mathsf{in}}
    \lvert\psi\rangle_{\mathsf{cnt}}
    =
    \begin{cases}
        \lvert x\rangle_{\mathsf{in}}
        J_i\lvert\psi\rangle_{\mathsf{cnt}},
            & C_i(x)=0,\\[1mm]
        \lvert x\rangle_{\mathsf{in}}
        \lvert\psi\rangle_{\mathsf{cnt}},
            & C_i(x)=1.
    \end{cases}
\end{equation}
Note that $\widehat J_i$ can be implemented conditioning $J_i$ on the qubits representing the variables occurring in $C_i$.

Now define the \(i\)-th clause update by
\[
    V_i
    :=
    \widehat J_iJ_i.
\]
Since \(J_i^2=I\), we have
\begin{equation}
\label{eq:satisfied-clause-update}
    V_i
    \lvert x\rangle_{\mathsf{in}}
    \lvert\psi\rangle_{\mathsf{cnt}}
    =
    \begin{cases}
        \lvert x\rangle_{\mathsf{in}}
        J_i\lvert\psi\rangle_{\mathsf{cnt}},
            & C_i(x)=1,\\[1mm]
        \lvert x\rangle_{\mathsf{in}}
        \lvert\psi\rangle_{\mathsf{cnt}},
            & C_i(x)=0.
    \end{cases}
\end{equation}
Thus, the counter moves by one step when \(C_i\) is satisfied and does nothing when \(C_i\) is not satisfied.

\par\medskip
\noindent\textbf{Implementation over elementary gates.}
The $J_i$ and $\widehat J_i$ operators are multi-controlled SWAP operators with at most $k + 1$ controls. These can indeed be implemented using $O_k(1)$ standard gates and $k$ ancilla qubits in $\mathsf{wrk}$. The ancillas return to $\ket{0}$, so the ancillas may be reused. By partitioning the clauses into $R$ subsets and using fresh ancillas for each subset we can ensure that there are $kR$ ancilla qubits that are all acted upon at most $O_k(L)$ times.

\par\medskip
\noindent\textbf{Full verification circuit.}
The full verification circuit will now be given by
\[
    U_\Phi
    :=
    Q_mV_mV_{m-1}\cdots V_1P_0,
\]
where $P_0$ is the circuit preparing $\ket{1,1}_{\mathsf{cnt}}$ in the $\mathsf{row}$ and $\mathsf{col}$ registers and $Q_m$ is the circuit flipping the output qubit to 1 if the counter state is $\ket{r_m,c_m}_{\mathsf{cnt}}$.
Note that if $\Phi$ is satisfiable, every $V_i$ will increment the counter, whereas if $\Phi$ is unsatisfiable, it will fail to increment at least once, and once it has done so it will no longer increment at all.
We thus have
\begin{equation}
\label{eq:two-unary-stopping}
\begin{split}
    U_\Phi
    \lvert x\rangle_{\mathsf{in}}
    \lvert0^B\rangle_{\mathsf{row}}
    \lvert0^B\rangle_{\mathsf{col}}
    \lvert0^{kR}\rangle_{\mathsf{wrk}}\ket{0}_{\mathsf{out}}
    &=
    \lvert x\rangle_{\mathsf{in}}
    \lvert r_{\tau(x)},c_{\tau(x)}\rangle_{\mathsf{cnt}}
    \lvert0^{kR}\rangle_{\mathsf{wrk}}\ket{\Phi(x)}_{\mathsf{out}},
\end{split}
\end{equation}
where
\[
    \tau(x)
    :=
    \begin{cases}
        m,
            & \Phi(x)=1,\\[1mm]
        \min\{i\in[m]:C_i(x)=0\}-1,
            & \Phi(x)=0.
    \end{cases}
\]

Each \(V_i\) is implemented using \(O_k(1)\) elementary gates.
The preparation \(P_0\) uses two elementary gates, and the final readout \(Q_m\) uses \(O(1)\) elementary gates. Consequently,
\[
    |U_\Phi|=O_k(m).
\]
Furthermore, the number of ancilla qubits satisfy the claimed bounds by construction.

\par\medskip
\noindent\textbf{Gate-degree bounds}
As for the gate-degree bounds, because \(k\) is fixed, the elementary-gate implementation of any \(V_i\) contains at most $O_k(1)$ gates, so any $V_i$ acts on each participating qubits at with at most $O_k(1)$ elementary gates.
An input qubit \(x_j\) participates only in updates corresponding to clauses in which \(x_j\) or \(\neg x_j\) occurs. Therefore,
\[
    d_{U_\Phi}(x_j)
    \le
    O_k\bigl(\operatorname{occ}_{\Phi}(x_j)\bigr).
\]

Because we use new work qubits after processing $L$ clauses, every work qubit is acted on only $O_k(L) = O_k(\sqrt{m})$ times.

It remains to bound the degrees of the two unary counter registers.
Fix a row qubit \(\mathsf{row}[r]\). It participates as the control qubit in the \(B-1\) horizontal transitions along row \(r\), and as a swap target in at most two vertical transitions connecting row \(r\) to its neighboring rows. Hence it belongs to at most
\[
    B+1
\]
of the transition triples \(\{g_i,u_i,v_i\}\).

Now fix a column qubit \(\mathsf{col}[c]\). In each row it participates as a swap target in at most two horizontal transitions, for at most \(2B\) horizontal transitions in total. It can additionally serve as the control of a vertical transition. There are at most \(B-1\) vertical transitions altogether. Thus it belongs to fewer than
\[
    3B
\]
transition triples.

The circuit uses only a prefix of the full snake path, so these bounds also hold for the actual sequence of \(m\) transitions. Consequently, every counter qubit \(q\in\mathsf{row}\cup\mathsf{col}\) satisfies
\[
    d_{U_\Phi}(q)
    \le
    3\gamma_kB+O(1)
    =
    O_k(B).
\]
The \(O(1)\) term accounts for the initial preparation and the final readout.

Finally, the output qubit participates only in the elementary-gate decomposition of \(Q\), and hence
\[
    d_{U_\Phi}(\mathsf{out})=O(1).
\]

Since the total number of literal occurrences is at most \(\Delta n\),
\[
    m\le\Delta n.
\]
It follows that
\[
    2B+kR+1
    =
    O_{k,\Delta}(\sqrt n)
\]
and
\[
    \max_q d_{U_\Phi}(q)
    =
    O_{k,\Delta}(\sqrt n).
\]
This completes the proof.
\end{proof}

Using this new verification circuit, we obtain the following (Q)SETH hardness result, (essentially) matching \cref{cor:smallnumberhighdegree}.

\begin{corollary}
\label{cor:hardnesssmallhigh}
    The (Q)SETH hardness of quantum 5-SAT remains even when restricting to quantum SAT instances with $N$ qubits, $m = O(N)$ clauses and a partition $L \cup H = [N]$ of low and high degree qubits where $|H| = O(\sqrt{N})$ and 
    \begin{align}
        q \in L &\implies d_q = O(1) \\
        q \in H &\implies d_q = O(\sqrt{N}).
    \end{align}
    Here $d_q$ denotes the number of constraints acting on qubit $q$. 
\end{corollary}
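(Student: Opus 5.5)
We follow the same reduction pattern as in the proof of \cref{thm:sethhard}, but we replace the verifier of \cref{lem:U_compute_Phi} by the bounded-degree verifier of \cref{thm:two-unary-sat-verifier}, and we then count qubit degrees carefully.

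First, by the Sparsification Lemma \cite{Impagliazzo+08}, we may assume that the $k$-CNF formula $\Phi$ has $n$ variables and that every variable occurs at most $\Delta=O_k(1)$ times. This is standard: sparsification yields $O(n)$ clauses with bounded average occurrence, and the few high-occurrence variables can be handled by replacing them with equality-chained copies. The copying adds only $O(n)$ variables, and the constant factor must be kept close to $1$ so that SETH is preserved. I expect this to be the delicate point, because naive copying can inflate $n$ by a constant factor, which destroys fine-grained hardness. I would first try to cite the known bounded-occurrence form of SETH-hard $k$-SAT. Alternatively, I would keep high-occurrence variables as genuinely high-degree qubits: by sparsification there are only $o(n)$ such occurrences beyond a large constant $\Delta$, and these can be split using fresh variables at an $o(n)$ cost.

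Second, we apply \cref{thm:two-unary-sat-verifier} to obtain $U_\Phi$ with $T=O(n)$ gates on $n+O(\sqrt n)$ qubits. Input qubits have degree $O(1)$ and the remaining $O(\sqrt n)$ qubits have degree $O(\sqrt n)$. We then apply \cref{thm:compressedcirctoQSAT} with $S_K=S_L=\sqrt T$, which gives a $5$-local QSAT instance on $N=n+O(\sqrt n)$ qubits with the required gap $\Omega(1/T^3)$.

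Third, we count degrees in $H$, following the paragraph on counting qubit degrees. Each gate of $U_\Phi$ on qubit $q$ contributes $O(1)$ terms $\Pigate_t$ acting on $q$, so computational qubits keep degree $O(d_{U_\Phi}(q))$. This is $O(1)$ for inputs and $O(\sqrt n)$ for row, column and work qubits. The exception is $\Hin$, which touches each ancilla once, and that is fine. The clock qubits in $P,U,L$ number $O(\sqrt T)=O(\sqrt n)$, and each has degree $O(S_K+S_L)=O(\sqrt n)$.

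One issue needs checking. $\Hlesstwo$ and the pairwise terms of $\Hbrav_L$ have $\Theta(S_K^2)$ and $\Theta(S_L^2)$ terms, so each clock qubit has degree $\Theta(\sqrt n)$. This is still within the bound, and the total term count is $O(T)=O(n)$, so $m=O(N)$.

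We take $H$ to be the set of clock qubits together with the $\mathsf{row}$, $\mathsf{col}$ and $\mathsf{wrk}$ registers and $\mathsf{out}$, and $L$ to be the input qubits. The final step then repeats the proof of \cref{thm:sethhard}: since $N=n(1+o(1))$, an $O^*(2^{(1-\varepsilon)N})$ or $O^*(2^{(1-\varepsilon)N/2})$ algorithm on this restricted class would refute SETH or QSETH, respectively.
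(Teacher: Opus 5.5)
Your proof is essentially the paper's. You reduce to bounded-occurrence $k$-SAT, feed the two-counter verifier of \cref{thm:two-unary-sat-verifier} into \cref{thm:compressedcirctoQSAT} with $S_K=S_L=\sqrt T$, and count degrees as the paper does: input qubits have degree $O(1)$, clock, $\mathsf{row}$, $\mathsf{col}$ and $\mathsf{wrk}$ qubits have degree $O(\sqrt N)$, and there are $O(T)=O(N)$ terms in total.

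The step you flag as delicate is not delicate, and you should drop your workarounds for it. The Sparsification Lemma bounds the \emph{maximum} number of occurrences of every variable in each subformula $\Phi_i$ by a constant $c(k,\eta)$, not merely the average. The price is writing $\Phi=\bigvee_i\Phi_i$ with $2^{\eta n}$ disjuncts. The paper solves each $\Phi_i$ separately and absorbs the $2^{\eta n}$ factor by choosing $\eta<\varepsilon'$, where $\varepsilon'<\varepsilon$ is the savings; for QSETH it takes $\eta<\varepsilon/2$. No equality-chain copying is needed. Note also that $\Delta$ depends on $\eta$, hence on $\varepsilon$, and not only on $k$.

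Your fallbacks are wrong as stated. A formula with bounded \emph{average} occurrence can have $\sqrt n$ variables each occurring $\sqrt n$ times. Then the excess over a constant $\Delta$ is $\Theta(n)$, not $o(n)$. Keeping high-occurrence variables as high-degree qubits could leave $\Theta(n/\Delta)$ of them, or individual variables of degree $\Theta(n)$. Either outcome violates $|H|=O(\sqrt N)$ or $d_q=O(\sqrt N)$.
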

\begin{proof}
    Assume an $\myO{2^{(1 - \varepsilon)n}}$ algorithm exists for the degree bounded version of quantum 5-SAT from the corollary statement. We will now contradict SETH. Let $\Phi$ be a $k$-SAT instance. Choose $\eta < \varepsilon$ and use the Sparsification Lemma to write $\Phi = \bigvee_i \Phi_i$ as a disjunction over $2^{\eta n}$ formulae. The $\Phi_i$ will have $O_{k, \eta}(n)$ clauses and every variable occurs at most $O_{k,\eta}(1)$ times.
    
    We now solve the $\Phi_i$ along the lines of \cref{thm:sethhard}, but using the modified verification circuit from \cref{thm:two-unary-sat-verifier}. This circuit acts on $n$ input qubits, $O(\sqrt{n})$ ancilla qubits, and consists of $T = O_k(n)$ elementary gates.

    The Hamiltonian resulting from \cref{thm:compressedcirctoQSAT} will act on a clock register of $4\sqrt{T} = O_k(\sqrt{n})$ qubits, and on a computational register consisting of $n$ input qubits and $O(\sqrt{n})$ ancilla qubits. The total number of qubits is thus $N = n + O_k(\sqrt{n})$. 

    The number of clauses of $H$ will be $O_k(N)$, as there will be $O(T) = O_k(N)$ incrementing projectors and $\Hcomp_{KL}$ consists of $O(T) = O_k({N})$ clauses, $\Hin$ of $O(\sqrt{N})$ clauses (this is the size of the ancilla register) and $\Hout$ of a single clause.

    As for the qubit degrees, by the Sparsification Lemma, every variable appears in $O_{k, \eta}(1)$ clauses. This means the verification circuit from \cref{thm:two-unary-sat-verifier} acts on each input qubit at most $O_{k,\eta}(1)$ times, and hence the input qubits are acted upon by at most $O_{k, \eta}(1)$ constraints of $H$ each.

    The clock qubits are acted upon at most $O(\sqrt{T}) = O_{k,\eta}(\sqrt{N})$ times each. Finally, by \cref{thm:two-unary-sat-verifier} the ancilla qubits are acted upon $O_k(\sqrt{N})$ times.

    There are thus $O_k(\sqrt{N})$ qubits that are acted upon by $O_k(\sqrt{N})$ projectors (the clock and ancilla qubits), whereas the other qubits (the input qubits) are acted upon only $O_{k, \eta}(1)$ times. This means that the hypothetical quantum $5$-SAT algorithm can be used to solve the formulae $\Phi_i$ in time $\myO{2^{(1 - \varepsilon')n}}$ for any $0 < \varepsilon' < \varepsilon$. Choosing $\eta < \varepsilon' < \varepsilon$, there is thus a $2^{\eta n} \myO{2^{(1 - \varepsilon') n}} = \myO{2^{(1 - (\varepsilon' - \eta))n}}$ algorithm for $k$-SAT. As $\varepsilon' - \eta > 0$, this contradicts SETH. The proof for QSETH is similar except we need to set $\eta < \frac{\varepsilon}{2}$. 
\end{proof}

\section*{AI disclosure}
All original ideas were derived from the authors, and ChatGPT 5.5 and 5.6 were used to aid in the write-up. The authors take full responsibility for all content of this manuscript.

\section*{Acknowledgments}
AH thanks Thomas Vidick for discussions.

AH is supported by JSPS KAKENHI grant No.~24H00071, 25K24674, 25K2446. JK was supported by the DFG under grant number 450041824. FLG is supported by JSPS KAKENHI grant No.~24H00071, 25K24674, 25K2446, MEXT Q-LEAP grant No.~JPMXS0120319794, JST ASPIRE grant No.~JPMJAP2302 and JST CREST grant No.~JPMJCR24I4. ST is supported by JSPS KAKENHI grant No.~JP22K11909. 

\bibliographystyle{alpha}
\bibliography{ref}

\appendix

\section{Proof of \cref{lem:U_compute_Phi}}\label{appendix}

\begin{proof}
All logarithms are base two.  Set
\[
    r:=\left\lceil\log_2(m+1)\right\rceil,
\]
so that every integer in $\{0,1,\ldots,m\}$ has an $r$-bit binary representation.  
The circuit uses an $r$-qubit counter register $\mathsf{cnt}$, a 1-qubit clause register $\mathsf{cls}$, a 1-qubit output register $\mathsf{out}$, and $O(r)$ clean work qubits that are reused from one step to the next.

\paragraph{Computing one clause.}
For each $i\in[m]$, let $W_i$ be a reversible circuit satisfying
\[
    W_i\lvert x\rangle_{\mathrm{in}}\lvert b\rangle_{\mathsf{cls}}
    =
    \lvert x\rangle_{\mathrm{in}}
    \lvert b\oplus C_i(x)\rangle_{\mathsf{cls}}
\]
for all $x\in\{0,1\}^n$ and $b\in \{0,1\}$.  Since $C_i$ is the OR of at most $k$ literals, $W_i$ can be implemented by a multi-controlled NOT, together with $O(k)$ NOT gates. Using clean work qubits, both $W_i$ and $W_i^\dagger$ therefore have size $O(k)$ and use $O(k)$ reusable work qubits.

\paragraph{The step-dependent counter update.}
For a positive integer $i$, let $\nu_2(i)$ be the largest integer $s\geq 0$ such that $2^s$ divides $i$, and define
\[
    t_i:=1+\nu_2(i).
\]
Equivalently, $t_i$ is the number of bits that change when the binary integer $i-1$ is incremented to $i$.  More explicitly, for some possibly empty prefix $p_i$,
\[
    i-1=(p_i\,0\,1^{t_i-1})_2,
    \qquad
    i=(p_i\,1\,0^{t_i-1})_2.
\]

For $1\leq t\leq r$, define $\Suffix_t$ on the computational basis of the counter by
\begin{align*}
    \Suffix_t\lvert p\,0\,1^{t-1}\rangle
        &=\lvert p\,1\,0^{t-1}\rangle,\\
    \Suffix_t\lvert p\,1\,0^{t-1}\rangle
        &=\lvert p\,0\,1^{t-1}\rangle
\end{align*}
for every prefix $p\in\{0,1\}^{r-t}$, while fixing all other basis states.  Thus $\Suffix_t$ swaps two suffix patterns and is an involution.  In particular, it is unitary.  It has the following two properties:
\begin{enumerate}
    \item $\Suffix_{t_i}$ maps the ideal counter value $i-1$ to $i$;
    \item for every integer $z$ represented by the counter,
    \[
        \Suffix_t(z)\in\{z-1,z,z+1\}.
    \]
\end{enumerate}
The second property holds because, for each fixed prefix $p$, the two swapped binary strings represent adjacent integers.

For example, the first few updates are
\[
    0\leftrightarrow1,
    \qquad
    01\leftrightarrow10,
    \qquad
    0\leftrightarrow1,
    \qquad
    011\leftrightarrow100,
    \qquad\ldots
\]
where each operation acts only on the displayed least significant bits. 
Let $\CtrlSuffix{t}$ denote $\Suffix_t$ controlled by the clause qubit $\cls$.

\begin{claim}
\label{claim:suffix-implementation}
The controlled update $\CtrlSuffix{t}$ can be implemented using $O(t)$ elementary gates and $O(t)$ clean work qubits.
\end{claim}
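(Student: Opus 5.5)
We implement $\CtrlSuffix{t}$ in three stages: compute a flag recognizing the two patterns, use it to control bit flips on the last $t$ counter bits, then uncompute the flag. Index the last $t$ counter bits as $y_1,\dots,y_t$, with $y_1$ the most significant of the suffix and $y_t$ the least significant. The two patterns $0\,1^{t-1}$ and $1\,0^{t-1}$ are exactly the length-$t$ strings satisfying $y_j\neq y_1$ for all $2\le j\le t$. The key observation is that $\Suffix_t$ acts on those strings as the complement of all $t$ bits, and it fixes every other basis state. So $\Suffix_t$ is ``flip all $t$ bits iff $y_j\oplus y_1=1$ for all $j\ge 2$.''

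\textbf{Stage 1 (compute the flag).} Apply CNOTs from $y_1$ onto each of $y_2,\dots,y_t$. This maps $y_j\mapsto y_j\oplus y_1$ using $t-1$ gates. The pattern condition becomes ``$y_2=\dots=y_t=1$.'' Now compute into a clean flag qubit $f$ the AND of $\cls$ and $y_2,\dots,y_t$. This is a $t$-controlled Toffoli, which the standard ladder of Toffolis decomposes into $O(t)$ Clifford+$T$ gates using $O(t)$ clean work qubits returned to $\ket0$. Then undo the CNOTs.

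\textbf{Stage 2 (controlled complement).} Apply $t$ CNOTs controlled by $f$ targeting $y_1,\dots,y_t$. The difficulty is that flipping the suffix must not disturb the condition defining $f$, otherwise $f$ cannot be uncomputed. Complementing all bits preserves the property $y_j\neq y_1$ for all $j\ge2$, so the predicate, and hence $f$, is invariant under the complement.

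\textbf{Stage 3 (uncompute the flag).} Recompute the same AND, which XORs it into $f$ and returns $f$ to $\ket0$. Then restore $y$ with the same CNOT conjugation.

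Correctness is checked on basis states, which reduces to the invariance noted in Stage 2. Every gate and ancilla count is linear in $t$, giving $O(t)$ gates and $O(t)$ clean work qubits. This also yields the amortised $O(m)$ total in the lemma, since $\sum_i t_i=O(m)$.
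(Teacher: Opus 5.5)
Your proposal is correct and is essentially the paper's proof. Both arguments recognize the two patterns through the parities $y_j\oplus y_1$, compute a flag (ANDed with the clause qubit) with one $O(t)$-size multi-controlled NOT, complement all $t$ suffix bits under that flag, and uncompute the flag using the fact that complementing every bit leaves each $y_j\oplus y_1$ unchanged. The only cosmetic difference is that the paper writes these parities into $t-1$ fresh ancillas $d_j$ that stay in place through the complement, whereas you compute them in place with CNOTs and repeat that conjugation for the uncomputation, which saves those ancillas at the cost of $O(t)$ extra CNOTs.
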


\begin{proof}
For $t=1$, the operation is simply a $\CNOT$ from $\cls$ to the least significant counter bit.
Assume $t\geq2$, and write the $t$ least significant counter bits as $q_{t-1}\cdots q_1q_0$.

Introduce clean qubits $d_0,\ldots,d_{t-2}$ and a flag qubit $a$.  First compute
\[
    d_j=q_j\oplus q_{t-1}
    \qquad (0\leq j\leq t-2).
\]
Then
\[
    d_0=d_1=\cdots=d_{t-2}=1
\]
if and only if the suffix is either $0\,1^{t-1}$ or $1\,0^{t-1}$. Compute the flag
\[
    a=\cls\wedge d_0\wedge\cdots\wedge d_{t-2}
\]
with a multi-controlled NOT.  Controlled by $a$, complement all $t$ suffix bits.  This exchanges $0\,1^{t-1}$ and $1\,0^{t-1}$.  Apply the same multi-controlled NOT again to reset $a$, and then uncompute the $d_j$'s.

The uncomputation is valid because complementing both $q_j$ and $q_{t-1}$ leaves $q_j\oplus q_{t-1}$ unchanged.  The construction uses $O(t)$ CNOT gates and two multi-controlled NOT gates with $t$ controls.  Each such gate has an exact $O(t)$-size implementation using $O(t)$ clean ancillas. Hence the total size and work-space costs are both $O(t)$.
\end{proof}

\paragraph{The complete verification procedure.}
Initialize $\cnt$ to $0$.  For each $i=1,\ldots,m$, compute $C_i(x)$ into $\cls$, apply the controlled update $\CtrlSuffix{t_i}$, and uncompute $\cls$.  Finally, apply a circuit $\Compare_m$ that flips $\outreg$ if and only if the counter contains $m$. The gate \(\Compare_m\) is implemented by conjugating an \(r\)-controlled NOT by NOT gates on precisely those counter bits where the binary expansion of \(m\) has value \(0\). Hence \(\Compare_m\) has exact \(O(r)\) size and uses \(O(r)\) clean work qubits. Thus
\[
    U_\Phi
    :=
    \Compare_m
    \bigl(W_m^\dagger \CtrlSuffix{t_m}W_m\bigr)
    \cdots
    \bigl(W_1^\dagger \CtrlSuffix{t_1}W_1\bigr).
\]
The comparison with the fixed string $\operatorname{bin}(m)$ uses $O(r)$ elementary gates and $O(r)$ reusable work qubits.

\paragraph{Completeness.}
Fix $x\in\bits^n$.
Let $z_i$ be the integer in the counter after the first $i$ clauses have been processed, with $z_0=0$. Suppose first that every clause is satisfied.  Then $C_i(x)=1$ for every $i$.  If $z_{i-1}=i-1$, the definition of $t_i$ gives
\[
    \Suffix_{t_i}(i-1)=i.
\]
Induction therefore gives $z_i=i$ for all $i$, and in particular $z_m=m$. Hence the output bit is $1$.

\paragraph{Soundness: the deficit never decreases.}
Define the deficit after step $i$ by
\[
    \Delta_i:=i-z_i.
\]
The key point is that a controlled suffix update can increase the counter by at most one. If $C_i(x)=1$, then
\[
    z_i\leq z_{i-1}+1,
\]
and consequently
\[
    \Delta_i=i-z_i
    \geq i-(z_{i-1}+1)
    =(i-1)-z_{i-1}
    =\Delta_{i-1}.
\]
If $C_i(x)=0$, the update is skipped, so $z_i=z_{i-1}$ and
\[
    \Delta_i=i-z_{i-1}=\Delta_{i-1}+1.
\]
Thus the deficit never decreases, and every unsatisfied clause increases it by one. Since $\Delta_0=0$, if at least one clause evaluates to zero, then
\[
    \Delta_m\geq1,
\]
and therefore
\[
    z_m=m-\Delta_m\leq m-1.
\]
Hence the final counter equals $m$ if and only if all clauses are satisfied.  The equality test therefore writes precisely $\Phi(x)$ into $\outreg$.

\paragraph{Gate and ancilla complexity.}
Computing and uncomputing the clauses uses $O(km)$ elementary gates. By \cref{claim:suffix-implementation}, the counter updates use
\[
    O\left(\sum_{i=1}^m t_i\right)
\]
gates. Moreover,
\begin{align*}
    \sum_{i=1}^m t_i
    &=m+\sum_{i=1}^m\nu_2(i)\\
    &=m+\sum_{s\geq1}\left\lfloor\frac{m}{2^s}\right\rfloor\\
    &<2m.
\end{align*}
Thus all counter updates together have size $O(m)$, even though an individual update may touch $\Theta(\log m)$ bits. The final $\Compare_m$ gate has size $O(r)$.  The total number of gates is therefore
\[
    O(km+m+r)=O\bigl(km+\log(m+1)\bigr),
\]
where we used $k\geq1$. Since all work qubits can be reused, the total number of ancilla qubits is $O(k+r)=O(k+\log(m+1))$. Since $k$ is constant and $m=O(n)$, the number of gates is $O(n)$ and the size of ancilla qubits is $O(\log n)$.
The circuit description is computable in time polynomial in the size of $\Phi$.
\end{proof}

\end{document}